\newtheorem{theorem}{Theorem}[section]
\newtheorem{proposition}[theorem]{Proposition}
\newtheorem{lemma}[theorem]{Lemma}
\newtheorem{corollary}[theorem]{Corollary}
\theoremstyle{definition}
\newtheorem{definition}[theorem]{Definition}
\newtheorem{remark}[theorem]{Remark}
\newtheorem{assumption}{Assumption}[section]
\begin{document}

\title{Optimal investment under behavioural criteria in incomplete diffusion market models\thanks{The second author gratefully acknowledges support from grants of CONACYT,
Mexico and of the University of Edinburgh. Part of this work was done while the first author was affiliated with the University of Edinburgh.}}
\author{M. R\'asonyi\thanks{Alfr\'ed R\'enyi Institute of Mathematics, Hungarian Academy of Sciences, Budapest and P\'azm\'any P\'eter
Catholic University, Budapest} and J.G. Rodr\'\i guez-Villarreal\thanks{University of Edinburgh}}
\maketitle

\section{Introduction}

The most commonly accepted model for investors' preferences is expected utility theory, 
going back to \cite{bernoulli,neumann53}. According to the tenets of this theory,
an investor prefers a random return $X$ to $Y$ if $Eu(X)\geq Eu(Y)$ for some utility function
$u:\mathbb{R}\to\mathbb{R}$ that is usually assumed non-increasing and concave. 
More recently, other theories have emerged and pose new challenges to mathematics.

The present paper treats preferences of cumulative prospect theory (CPT), \cite{kt,tk}, where an ``S-shaped'' $u$
is considered (i.e. convex up to a certain point and concave from there on). Also,
distorted probability measures are applied for calculating the utility of a given position with 
respect to a (possibly random) benchmark $G$. We remark that techniques of the present paper easily carry over
to other types of preferences, too, such as rank-dependent utility \cite{quiggin} or 
acceptability indices \cite{cherny}.

The theory of optimal portfolio choice for CPT preferences is
in its infancy yet. Continuous-time studies almost always assume a complete market model,
\cite{bkp,jz,cd,campi,r12}. Only very specific types of incomplete continuous-time models have been treated
to date (finite mixtures of complete models; the case where the price is a martingale under the
physical measure;
the case where the market price of risk is deterministic), see \cite{reichlinthesis,RR11}.
In the present paper we make a step forward and consider incomplete models of a diffusion type where
the return of the investment in consideration depends on some economic factors. Our main
result asserts, under mild assumptions, 
the existence of an optimal strategy when the driving noise
of the economic factors is independent of that of the investment and the rate of return is
non-negative. The independence condition
is, admittedly, rather stringent and does not allow a leverage effect (see \cite{black}). 
We are also able to accomodate 
models of a specific type where the factor may have non-zero correlation with the investment.
We think that our 
results open the door for further generalizations. 

\section{Optimal investment model under behavioural criteria}

In this section definitions and notation related to the problem of behavioural optimal investment are presented,
based on 
\cite{Kry01}, \cite{CarRas11}. 

Unfortunately, most of the techniques developed in the literature for finding optimal policies
rely on either the Markovian nature of the problem 
or on convex duality. These are no longer applicable under behavioural criteria. For this reason we
shall consider a weak-type formulation 
of the control problem associated with optimal investment (Subsection \ref{battle}). 
Introducing a relaxation of the problem for which
results in \cite{Kry01} apply, we can prove the existence of an optimal investment strategy
(Subsection \ref{ketto}).

\subsection{The setting: market and preferences}\label{battle}

Fix a finite horizon $T>0$. We consider a financial market consisting of a risky asset, whose discounted price
 $\left(S_{t}\right)_{0\leqslant t\leqslant T}$ depends on economic factors. These factors are described by a
 $d$-dimensional 
stochastic processes $\left\{Y_t\right\}_{t\geqslant0}$. Without entering into rigorous definitions at this point,
our market model is described by the equations 
\begin{equation}\label{ekuY}
dY_t=\nu_t\left(Y_{\cdot}\right)dt+\kappa_t\left(Y_{\cdot}\right)dB_t,\,\,\,\mbox{and } \,Y_0=y, 
\end{equation}
\begin{equation}\label{eqS}
dS_t=\theta_t\left(Y_{\cdot}\right)S_tdt+\lambda_t\left(Y_{\cdot}\right)S_tdW_t\,\,\,\mbox{and }\, S_0=s>0, 
\end{equation}
with $B,W$ independent standard Brownian motions of appropriate dimensions.

We also assume that there is a riskless asset of constant price equal to $1$. We shall be more specific later in this section.
Stochastic volatility models provide prime examples of financial market models with
dynamics \eqref{ekuY} and \eqref{eqS}, see \cite{papa}.

The investor trades in the risky and riskless assets, investing a proportion $\phi_t\in\left[0,1\right]$ of his wealth into the risky asset
at time $t$. This leads to the following equation for the wealth of the investor at time $t$:
\begin{equation}\label{eqS2}
dX_t=\phi_t\theta_t\left(Y_{\cdot}\right)X_tdt+\phi_t\lambda_t\left(Y_{\cdot}\right)X_tdW_t\,\,\,\mbox{and }\, X_0=x, 
\end{equation}
where $x>0$ is the investor's initial capital.

Borrowing and short selling are not allowed, hence $\phi_t$ is a process taking values in $\left[0,1\right]$.
We note that, in this model, the risky asset's price has no influence on the economic factors. We will see in Section \ref{ext} below
how this assumption can be weakened. 
 
We will need certain closedness results on the laws of It\^{o} processes from \cite{Kry01}
hence it is necessary to work in the 'weak' setting of stochastic control theory, where the underlying probability space is not fixed.

We first set out the requirements for the coefficients in \eqref{ekuY}, \eqref{eqS2}.

Let $C\left(\left[0,T\right];\mathbb{R}^n\right)$ denote the family of $\mathbb{R}^n$-valued continuous
functions on $[0,T]$.

Denote by $p_{t}:C\left([0,T],\mathbb{R}^{d}\right)\rightarrow\mathbb{R}^d$ 
the projections $p_{t}\left(x_{\cdot}\right)=x_{t}$ and 
define the $\sigma$-algebras 
$\mathscr{N}_{t}=\sigma\left(\left\{ p_{s}:s\leqslant t\right\} \right)$,  
and $\mathscr{N}=\sigma\left(\left\{ p_{s}:s\leqslant T\right\} \right)$.

\begin{definition} \label{def2}
Let $\nu\left(t,y_{\cdot}\right):[0,T]\times C\left([0,T],\mathbb{R}^{d}\right)\to \mathbb{R}^d$
be such that the restriction of  $\nu$ to $\left[0,t\right]\times C\left(\left[0,T\right];\mathbb{R}^d\right)$ is 
$\mathcal{B}\left([0,t]\right)\otimes \mathscr{N}_{t}$-measurable, for any $0\leqslant t\leqslant T$.
We shall denote this functional by either $\nu_t\left(y_{\cdot}\right)$  or $\nu\left(t,y_{\cdot}\right)$.

Similarly, we define the coefficients $\theta, \lambda, \kappa $ with the same measurability properties 
as $\nu$, but with values in $\mathbb{R}$, $\mathbb{R}$ and $S_+^d$, respectively, where 
$S^{d}_{+}$ denotes the set of real, symmetric and positive semidefinite $d\times d$ matrices. 
\end{definition}

\begin{definition} \label{def1}
An investment stategy $\pi$ is given by the following collection: 
$$
\pi:=\left(\Omega,\mathcal{F},\left\{\mathcal{F}_t\right\}_{0\leqslant t \leqslant T},\mathbb{P},X_t,Y_t,\phi_t,\left(B_t,W_t\right),(x,y)\right),
$$ 
with $x>0$ and $y\in\mathbb{R}^d$, where
\begin{description}
\item[(a)] $\left(\Omega,\mathcal{F},\left\{\mathcal{F}_t\right\}_{0\leqslant t \leqslant T},\mathbb{P}\right)$ is a complete filtered probability space  whose filtration satisfies the usual 
conditions;
\item[(b)] the process $\left(B_t,W_t\right)_{t \geqslant 0}$ is a standard $d+1$-dimensional $\mathcal{F}_t$-Wiener process;
\item[(c)] $\phi_t:\Omega\times\left[0,T\right]\rightarrow[0,1]$ is $\mathcal{F}\otimes\mathcal{B}\left(\left[0,T\right]\right)$-measurable and 
$\mathcal{F}_t$-adapted;
\item[(d)] on the filtered probability space $X_t,Y_t$ are $\mathcal{F}\otimes\mathcal{B}\left([0,T]\right)$-measurable and 
$\mathcal{F}_t$-adapted processes such that
\begin{equation}\label{eq1}
Y_t=y+\int_0^t\nu_s\left(Y_{\cdot}\right)ds+\int_0^t \kappa_s\left(Y_{\cdot}\right)dB_s, 
\end{equation}
\begin{equation}\label{eq2}
X_t=x+\int_0^t\phi_s\theta_s\left(Y_{\cdot}\right)X_sds+\int_0^t \phi_s\lambda_s\left(Y_{\cdot}\right)X_sdW_s, 
\end{equation}
for ${0\leqslant t \leqslant T}$.
\end{description}
\end{definition}

In other words,  $\left(\Omega,\mathcal{F},\left\{\mathcal{F}_t\right\}_{0\leqslant t \leqslant T},\mathbb{P},X_t,Y_t,\left(B_t,W_t\right),(x,y)\right)$ is a 
weak solution of the system of equations \eqref{ekuY}, \eqref{eqS2}. The process $\phi_t$ represents a ratio of investment 
in the risky asset, it is measurable and $\mathcal{F}_t$-adapted.
We do not consider the price process $S_t$ from \eqref{eqS}  at all since it is enough to work with the 'controlled dynamics' $X_t$.

When needed, we will use the notation $X^{\pi},Y^{\pi}$, etc. to indicate that the object we mean
belongs to $\pi$. Let $\Pi=\Pi(x,y)$ denote the collection of all strategies.

\begin{assumption}\label{positv}
The functional $\theta$ is non-negative i.e. $\theta\left(t,y_{\cdot}\right)\geqslant 0$ for all $t\in\mathbb{R}_{+}$ and 
$y\in C\left(\left[0,T\right];\mathbb{R}^{d}\right)$.
\end{assumption}
\begin{remark}\label{pvereturn}
In other words, the return of the risky asset must be non-negative. This looks rather a harmless assumption. 
On the other hand, as mentioned before, (b) in Definition \ref{def1} is stringent. 
It excludes the 'leverage effect' where the volatility and the stock prices have (negative) correlation. This condition can be relaxed, see Section \ref{ext}.
\end{remark}

We now present the framework of optimal investment under CPT, as proposed in \cite{tk}. We follow \cite{RR11} and \cite{CarRas11}.

The investor assesses strategies by means of utilities on gains and losses, which are described in terms of functions $u_{\pm}:\mathbb{R}_{+}\rightarrow\mathbb{R}_+$, by a reference point $G$ and 
functions $w_{\pm}:\left[0,1\right]\rightarrow\left[0,1\right]$. The latter functions $w_{\pm}$ are introduced 
with the aim of explaining the 
distortions of her perception on the ``likelihood'' of her gains and losses.

According to the tenets of CPT, investors use 
benchmarks to asses the portfolio outcomes, this is modelled by a real-valued random variable $G$. 
The quantity $G$ depends on economic factors as follows: 
let us denote by $F$ a fixed deterministic functional  $F:C\left(\left[0,T\right];\mathbb{R}^{d}\right)\rightarrow\mathbb{R}_+$ which 
is $\mathscr{N}_T$-measurable. As the probability space is not fixed, for each $\pi\in\Pi$ we define
the corresponding reference point by $G^{\pi}:=F\left(Y^{\pi}_{\cdot}\right)$. That is, we assume that the benchmark is a non-negative functional of the
economic factors. Results can easily be extended to the slightly more general case where
$G^{\pi}:=F(Y^{pi}_{\cdot},B^{\pi}_{\cdot})$ for some functional $F$.

For any strategy $\pi\in\Pi$, we define the functionals 
\begin{equation}
V_{+}\left(\pi\right):=\int_{0}^{\infty}w_{+}\left(\mathbb{P^{\pi}}\left(u_{+}\left(\left(X_{T}^{\pi}-G^{\pi}\right)_{+}\right)>t\right)\right)dt,\label{behavioural functional 3}
\end{equation}
and 
\begin{equation}
V_{-}\left(\pi\right):=\int_{0}^{\infty}w_{-}\left(\mathbb{P^{\pi}}\left(u_{-}\left(\left(X_{T}^{\pi}-G^{\pi}\right)_{-}\right)>t\right)\right)dt.\label{eq:behavioural functional 4}
\end{equation}
The optimal portfolio problem for an investor under CPT consists in maximising
the following performance functional: 
\begin{equation} V\left(\pi\right):=V_{+}\left(\pi\right)-V_{-}\left(\pi\right),\label{eq:behavioural functional} \end{equation}
which is defined provided that at least one of the summands is finite.  
Fix $x>0$, $y\in\mathbb{R}$. Set $\Pi':=\{\pi\in\Pi(x,y):V_-(\pi)<\infty\}$
and define
\begin{equation}
V:=\sup_{\pi\in\Pi'}V\left(\pi\right). \label{eq:V funt}
\end{equation}
The value $V$ represents the maximal satisfaction achievable by investing in the stock and riskless asset in a CPT framework.
Our purpose is to prove the existence of $\hat{\pi}\in\Pi'$ such that $V(\hat{\pi})=V$.

%We shall be concerned with the issues of well-posedness and existence of optimal strategies for the functional (\ref{eq:V funt}). 

\subsection{Main result}\label{mrr}

We make the following assumptions. Recall the notation $y_{t}^{\star}=\sup_{s\leqslant t} \left|y_{s}\right|$.

\begin{assumption} \label{acoef1}
The functionals $\kappa,\,\, \lambda,\,\, \theta\,\, \mbox{and}\,\, \nu$ are uniformly bounded on 
$\left[0,T\right]\times C\left(\left[0,T\right];\mathbb{R}^{d}\right)$.
Furthermore, for fixed $t\geqslant 0$ and functions  $y^{n},z\in C\left(\left[0,T\right];\mathbb{R}^{d}\right)$ 
such that $\left(y^{n}-z\right)^{\star}_t \rightarrow 0$, $n\to\infty$ we have
$\kappa_t\left(y^{n}_{\cdot}\right)\rightarrow \kappa_t\left(z_{\cdot}\right)$ and the same holds for 
the functionals 
$\lambda,\theta$ and $\nu$. We will refer to this as the coefficients being path-continuous at any time 
$t\in\left[0,T\right]$. 
\end{assumption}

\begin{assumption} \label{ul}
A (weak) solution of equation (\ref{eq1}) exists and it is unique in law.   
\end{assumption}
%The assumption \ref{ul} is essential. \par
%\begin{remark}
%The assumption \ref{ul} implies also that the $joint$ law of $\left(Y_{\cdot},B_{\cdot}\right)$ is also unique. 
%See Theorem 3.1 in \cite{Chy} or \cite{leuridan}.
%\end{remark}

\begin{assumption}\label{u} 
We assume that $u_{\pm}:\mathbb{R}_{+}\rightarrow\mathbb{R}_{+}$ and 
$w_{\pm}:\left[0,1\right]\rightarrow\left[0,1\right]$ are continuous, non-decreasing functions with $u_{\pm}\left(0\right)=0,\, w_{\pm}\left(0\right)=0,\, w_{\pm}\left(1\right)=1,$ and  
\begin{equation} u_{+}\left(x\right)\leqslant k_{+}\left(x^{\alpha}+1\right), \label{u+} \mbox{for all } x\in\mathbb{R}_{+},\end{equation}  
\begin{equation} w_{+}\left(p\right)\leqslant g_{+}p^{\gamma}, \label{w+} \mbox{ for all } p\in\left[0,1\right],\end{equation}  
with $\gamma$, $\alpha>0$, $k_{+},g_{+}>0$ fixed constants. 
\end{assumption}

We denote by $L^p(\Omega,\mathbb{P})$ the usual space of $p$-integrable random variables on a
probability space $(\Omega,\mathcal{F},\mathbb{P})$.

\begin{assumption}\label{ghyp} There is $\vartheta>0$ such that 
$\vartheta\gamma>1$ and $G^{\pi}\in L^{\vartheta\gamma}\left(\Omega,\mathbb{P}^{\pi}\right)$ for all $\pi\in\Pi$.
\end{assumption}

Note that, under Assumption \ref{ul}, the law of $G^{\pi}$ is independent of $\pi$ and hence Assumption 
\ref{ghyp} holds iff $G^{\pi}\in L^{\vartheta\gamma}\left(\Omega,\mathbb{P}^{\pi}\right)$
for one particular $\pi$.
 
In order to ensure that the functional $V$ and the optimisation problem in (\ref{eq:V funt}) are defined over a non-empty set, we introduce the 
following assumption on $u_{-}$, the distortion function $w_{-}$ and the reference point $G^{\pi}$. 

\begin{assumption}\label{vminus}
The functions $w_{-}, u_{-}$ are such that,
for all $\pi\in\Pi$, 
\begin{equation}
\int_0^{\infty} w_{-}\left(\mathbb{P}^{\pi}\left(u_{-}\left(G^{\pi}\right)>y\right)\right)dy<\infty. \label{vmenos}
\end{equation}
\end{assumption}

This assumption ensures that the set $\Pi'$ is not empty. Indeed, let $(\Omega,\mathcal{F},\left\{\mathcal{F}_t\right\}_{0\leqslant t \leqslant T},\mathbb{P})$
be a filtered probability space where \eqref{ekuY} has a solution $Y_t$. Then setting $\phi_t:=0$ and $X_t:=x$ for all $t$,
$$
\left(\Omega,\mathcal{F},\left\{\mathcal{F}_t\right\}_{0\leqslant t \leqslant T},\mathbb{P},x,Y_t,
0,\left(B_t,W_t\right),(x,y)\right)
$$
belongs to $\Pi'$. Fix $x>0$ and $y\in\mathbb{R}$. Our main result can now be stated.

\begin{theorem} \label{wp1}
Under Assumptions \ref{positv}, \ref{acoef1}, \ref{ul}, \ref{u}, \ref{ghyp} and \ref{vminus} the problem 
\eqref{eq:V funt} is well-posed, i.e. $V<\infty$.
Moreover, there exists an optimal strategy $\hat{\pi}\in\Pi'$ attaining the supremum in
(\ref{eq:V funt}), i.e. $V=V\left(\hat{\pi}\right)$. 
\end{theorem}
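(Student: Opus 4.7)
The plan is first to show $V<\infty$. Because $\phi_t\in[0,1]$ and $\theta,\lambda$ are uniformly bounded by Assumption \ref{acoef1}, standard exponential moment estimates for the linear SDE \eqref{eq2} give $\sup_{\pi\in\Pi}E^{\pi}[(X_T^\pi)^p]\le C_p$ for every $p\ge 1$, with $C_p$ depending only on the coefficient bounds, $x$, $T$, and $p$. Since $G^\pi\ge 0$, $(X_T^\pi-G^\pi)_+\le X_T^\pi$, and hence $u_+((X_T^\pi-G^\pi)_+)\le k_+((X_T^\pi)^\alpha+1)$ by \eqref{u+}. Choosing $p$ with $p\gamma>1$, Markov's inequality combined with \eqref{w+} yields
$$ w_+\bigl(\mathbb{P}^\pi(u_+((X_T^\pi-G^\pi)_+)>t)\bigr)\le g_+\min\bigl\{1,\,C/t^{p\gamma}\bigr\}, $$
with $C$ independent of $\pi$. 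Integration in $t$ then gives $\sup_\pi V_+(\pi)<\infty$ and consequently $V<\infty$.

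\textbf{Compactness and a candidate optimizer.} I next pick a maximizing sequence $\pi_n\in\Pi'$ with $V(\pi_n)\to V$. Uniform boundedness of the coefficients together with $\phi_n\in[0,1]$, inserted into Kolmogorov's continuity criterion, makes the laws of $(X_\cdot^{\pi_n},Y_\cdot^{\pi_n},B_\cdot^{\pi_n},W_\cdot^{\pi_n})$ tight on the appropriate continuous-path space. By Prokhorov's theorem, a subsequence converges weakly. At this point I invoke the closedness theorem of \cite{Kry01}: together with the path-continuity of the coefficients in Assumption \ref{acoef1} and the uniqueness in law of \eqref{eq1} from Assumption \ref{ul}, it identifies the limit as the law of a weak solution $\hat\pi=(\hat\Omega,\hat{\mathcal F},\{\hat{\mathcal F}_t\}_{t\le T},\hat{\mathbb P},\hat X,\hat Y,\hat\phi,(\hat B,\hat W),(x,y))\in\Pi$ of \eqref{eq1}--\eqref{eq2} with some $[0,1]$-valued adapted $\hat\phi$. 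This identification is the main obstacle of the proof and is exactly the reason to work in the weak formulation; uniqueness in law further ensures that $\hat Y$ has the law of the factor process, so Assumption \ref{ghyp} transfers to $\hat G=F(\hat Y_\cdot)$.

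\textbf{Upper semicontinuity and conclusion.} Along the extracted subsequence, the continuous mapping theorem gives $(X_T^{\pi_n},G^{\pi_n})$ converging in law to $(\hat X_T,\hat G)$, and thus at every continuity point $t$ of the limiting distributions,
$$ \mathbb{P}^{\pi_n}\bigl(u_\pm((X_T^{\pi_n}-G^{\pi_n})_\pm)>t\bigr)\to\hat{\mathbb P}\bigl(u_\pm((\hat X_T-\hat G)_\pm)>t\bigr). $$
Continuity of $w_\pm$ transfers this to the integrands of $V_\pm$. The uniform domination derived in the first step lets dominated convergence apply to $V_+$, so $V_+(\pi_n)\to V_+(\hat\pi)$, while Fatou's lemma on the non-negative integrand of $V_-$ gives $\liminf_n V_-(\pi_n)\ge V_-(\hat\pi)$. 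Combining,
$$ V=\limsup_n\bigl(V_+(\pi_n)-V_-(\pi_n)\bigr)\le V_+(\hat\pi)-V_-(\hat\pi)=V(\hat\pi). $$
Finally, since $\hat X_T\ge 0$ forces $(\hat X_T-\hat G)_-\le \hat G$, Assumption \ref{vminus} yields $V_-(\hat\pi)\le\int_0^\infty w_-(\hat{\mathbb P}(u_-(\hat G)>t))\,dt<\infty$, i.e.\ $\hat\pi\in\Pi'$; hence $\hat\pi$ is optimal.
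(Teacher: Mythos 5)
Your well-posedness step and your semicontinuity step at the end follow the paper's argument in substance (moment bounds plus Chebyshev and the growth conditions \eqref{u+}, \eqref{w+} for $V_+$; Fatou for $V_-$ and dominated/reverse-Fatou for $V_+$ along a maximizing sequence). The genuine gap is in the middle step, where you assert that the weak limit of the laws of $(X^{\pi_n},Y^{\pi_n})$ is again the law of a solution of \eqref{eq1}--\eqref{eq2} for \emph{some} $[0,1]$-valued adapted $\hat\phi$, citing the closedness theorem of \cite{Kry01}. That theorem requires the drift/diffusion coefficients of the controlled It\^o process to range over sets that are \emph{convex} (as well as closed, bounded and upper semicontinuous). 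For a genuine investment strategy the pair $\bigl(\tfrac12\phi_t^2\lambda^2 X_t^2,\ \phi_t\theta X_t\bigr)$ ranges over a curve parametrized by $\phi\in[0,1]$, which is not convex, so the class $\Pi$ is not known to be closed under weak convergence: the limit's coefficients a priori only lie in the closed convex hull. This is exactly why the paper introduces the relaxed class $\overline{\Pi}$ with the convexified sets $A_t$ of Definition \ref{def3} (parameters $0\leqslant m\leqslant 1$, $0\leqslant l\leqslant\sqrt m$), proves compactness and closedness there (Lemma \ref{rcoef}, Theorem \ref{thm3.2}), and then repairs the relaxed optimizer via Lemma \ref{pistar}: replacing $l_t$ by $\sqrt{m_t}$ multiplies $X$ by a factor $Z_t\geqslant 1$, which uses Assumption \ref{positv} ($\theta\geqslant 0$) in an essential way and produces a genuine portfolio value process with value at least as large. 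Your proof never uses Assumption \ref{positv}, which is the telltale sign that this step is missing; without it the candidate optimizer you construct need not belong to $\Pi$ at all.

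A secondary issue: you invoke the continuous mapping theorem to get convergence in law of $(X_T^{\pi_n},G^{\pi_n})$, but $G^{\pi}=F(Y^{\pi}_{\cdot})$ with $F$ only $\mathscr{N}_T$-measurable, not continuous. The paper handles this by a Skorokhod representation combined with Assumption \ref{ul} (all the $\tilde Y^k$ and $\tilde Y$ share the same law) and Th\'eor\`eme 1 of \cite{BEKSY}, which upgrades convergence in probability of $\tilde Y^k$ to convergence in probability of $F(\tilde Y^k)$ for merely measurable $F$. You should either assume $F$ continuous or supply this argument.
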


\subsection{A relaxation of the set of controls}\label{ketto}

We introduce a relaxation of the problem by extending the class of investment strategies given in 
Definition \ref{def1}, we shall call this extension the class of auxiliary controls.
This relaxation is introduced in order to ensure the closedness of the set of laws of the processes $\left(Y_{\cdot},X_{\cdot}\right)$. 

We follow the martingale problem formulation, thus we refer to $a_t$ and $b_t$  as the drift/diffusion
coefficients of the process $\left(Y_t,X_t\right)$, as they appear  
in the martingale problem formulation of equations \eqref{eq1} and \eqref{eq2}.
In order to use \cite{Kry01},
these coefficients must take values in a family of convex subsets of $S^{d+1}_{+}\times \mathbb{R}^{d+1}$ hence we 
shall consider a 'convex extension' of the set in which the coefficients in equations (\ref{eq1}) and (\ref{eq2}) take values.
 
\begin{definition} \label{def3}
Denote $\mathbb{A}=S^{d+1}_{+}\times\mathbb{R}^{d+1}$.
For any pair of continuous functions $(x_{\cdot},y_{\cdot})\in C([0,T];\mathbb{R}\times\mathbb{R}^{d})$ 
and for any $t\in\left[0,T\right]$ we define
\begin{eqnarray}\nonumber
A_t\left(x_{\cdot},y_{\cdot}\right) &=& \left\{(a,b)\in \mathbb{A} \middle| (a,b)=
\left(\left(\begin{array}{cc} \frac{1}{2}\kappa\kappa^{\star}\left(t,y_{\cdot}\right) & 0  \\ 0 & 
\frac{1}{2}m\lambda^2\left(t,y_{\cdot}\right)x_t^2 \\ \end{array} \right),
\left(\begin{array}{c} \nu\left(t,y_{\cdot}\right)\\ l\theta\left(t,y_{\cdot}\right)x_t \end{array} \right) \right),
\right.\\
\label{Atdef} & & \left.\begin{array}{c}
          0\leqslant m \leqslant 1,\\
         0\leqslant l \leqslant \sqrt{m} \\
  \end{array} \right\}
\end{eqnarray}
\end{definition}

\begin{remark}\label{ordiniaire} {\rm
Notice that, for any investment strategy $\pi$ as in Definition \ref{def1},  
if $\sigma_t=\left(\begin{array}{cc} \kappa\left(t,y_{\cdot}\right) & 0  \\
0 & \phi_t\lambda\left(t,y_{\cdot}\right)x_t \\ \end{array} \right)$ 
and $b_t=\left(\begin{array}{c} \nu\left(t,y_{\cdot}\right)    \\
\phi_t\theta\left(t,y_{\cdot}\right)x_t \\ \end{array}\right)$ then, 
defining $a_t=\frac{1}{2}\sigma_t\sigma_t^{\star}$, 
the pair $\left(a_t,b_t\right)$ belongs to $A_t\left(x_{\cdot},y_{\cdot}\right)$.}  
\end{remark}

%The following definition is imported from \cite{Kry01},
The following definition describes the 
familiy of auxiliary controls used throughout this work. It stresses the fact of having It\^{o} 
processes whose coefficients belong to the convex sets $A_t\left(x_{\cdot},y_{\cdot}\right)$ in 'a measurable way' as $t,x_{\cdot}$ and $y_{\cdot}$ vary. 

%Definition \ref{bPi}% 

\begin{definition}\label{bPi}
We define a familiy of auxiliary controls $\overline{\Pi}=\overline{\Pi}(x,y)$. Namely, an auxiliary control 
$\overline{\pi}\in\overline{\Pi}$ consists of a collection 
$$
\overline{\pi}:=\left(\Omega,\mathcal{F},\left\{\mathcal{F}_t\right\}_{0\leqslant t\leqslant T},\mathbb{P},X_t,Y_t,\left(B_t,W_t\right),(x,y)\right)
$$ 
where $x>0$, $y\in\mathbb{R}$,
\begin{description}
\item[(a)] $\left(\Omega,\mathcal{F},\left\{\mathcal{F}_t\right\}_{t\geqslant 0},\mathbb{P}\right)$ is a complete filtered probability space whose filtration satisfies the usual conditions;
\item[(b)] $\xi_t:=\left(B_t,W_t\right)$ is an $\mathbb{R}^{d+1}$-valued standard $\mathcal{F}_t$-Brownian motion;
\item[(c)] there exists an $\mathbb{A}$-valued, $ \mathcal{F}\otimes\mathcal{B}\left(\left[0,T\right]\right)$-measurable and $\mathcal{F}_t$-adapted process, 
denoted by $\left(a_t,b_t\right)$, such that (d) and (e) below hold;  
\item[(d)] $X_t$ and $Y_t$ are $\mathcal{F}\otimes\mathcal{B}\left(\left[0,T\right]\right)$-measurable and 
$\mathcal{F}_t$-adapted such that a.s. for all $t\geqslant 0$;
\begin{equation} \label{atbt}
\left(\begin{array}{c} Y_t  \\  X_t \\ \end{array}\right)=\left(\begin{array}{c} y  \\  x \\ 
\end{array}\right) + \int_0^t \sqrt{2 a_s}d\xi_s+ \int_0^t b_s ds;
\end{equation}
\item[(e)] for almost all $\left(\omega,t\right)\in\Omega\times\left[0,T\right]$, we have $\left(a_t,b_t\right)\in A_t\left(X_{\cdot},Y_{\cdot}\right)$. 
\end{description}
\end{definition}

We will often write $X^{\overline{\pi}}$, $Y^{\overline{\pi}}$ to indicate that we mean $X$, $Y$ belonging
to $\overline{\pi}$.

For each $\overline{\pi}\in\overline{\Pi}$, we can define $V_{\pm}(\overline{\pi})$ as before and
we can set $V(\overline{\pi}):=V_+(\overline{\pi})-V_-(\overline{\pi})$ for $\overline{\pi}\in\overline{\Pi}':=
\{\pi\in\overline{\Pi}:V_-({\pi})<\infty\}$.

\begin{remark}\label{landm}
%The set $A_t\left(X_{\cdot},Y_{\cdot}\right)$ in Definition \ref{def3} is justified as a relaxation of the dynamics of the economic factors process and the (relative) portfolio value process as well as condition (f) in definition \ref{bPi}. \\ 
For a pair of processes processes $a_t$ and $b_t$ in 
$A_t\left(X_{\cdot},Y_{\cdot}\right)$ one can define the corresponding real-valued processes $l_t$ and $m_t$ with 
$0\leqslant m_t \leqslant 1, 0\leqslant l_t \leqslant \sqrt{m_t}$ setting
$$
l_t:=b^{d+1}_t1_{\theta_t(t,Y_t)\neq 0}/(X_t\theta_t(t,Y_t)),\ m_t:=
a^{d+1,d+1}1_{\lambda_t(t,Y_t)\neq 0}/(X_t^2\lambda^2(t,Y_t)).
$$
Conditions \normalfont{(c),(d)} in Definition \ref{bPi} together with Assumption \ref{acoef1} imply
that 
$l_t,m_t$ can be chosen
$\mathcal{F}\otimes\mathcal{B}\left(\left[0,T\right]\right)$ measurable and $\mathcal{F}_t$-adapted.
\end{remark}  

Equation (\ref{atbt}) can be rewritten as the set of equations below. Denote 
\[
\sigma_t=\left(\begin{array}{cc} \kappa\left(t,Y_{\cdot}\right) & 0  \\ 0 & \sqrt{m_t}\lambda\left(t,Y_{\cdot}\right)X_t \\ \end{array} \right)
\]
and 
\[
b_t=\left(\begin{array}{c} \nu\left(t,Y_{\cdot}\right)    
\\  l_t\theta\left(t,Y_{\cdot}\right)X_t \\ \end{array}\right).
\]
Setting $a_t:=\frac{1}{2}\sigma_t\sigma^{\star}_t$,  
\begin{equation}\label{eq3}
Y_t=y+\int_0^t\nu_s\left(Y_{\cdot}\right)ds+\int_0^t\kappa_s\left(Y_{\cdot}\right)dB_s, 
\end{equation}
\begin{equation}\label{eq4}
X_t=x+\int_0^t l_s\theta_s\left(Y_{\cdot}\right)X_sds+\int_0^t \sqrt{m_s}\lambda_s\left(Y_{\cdot}\right)X_sdW_s. 
\end{equation}

%The 'norm' of the set $A_t\left(x_{\cdot},y_{\cdot}\right)$ is denoted by $\|A_t\left(x_{\cdot},y_{\cdot}\right) \|:=\max\left\{\left\|\left(\sigma_{t},b_{t}\right)\right\|:\left(a_t,b_t\right)\in A_t\right\}$.

\begin{definition}\label{prtf}
Let $\overline{\pi}\in\overline{\Pi}$ be a relaxed control. We say that $X^{\overline{\pi}}_t$ is a 
\textsl{portfolio value} process if $l_t=\sqrt{m_t}$, i.e. 
\begin{equation}
dX_t=\sqrt{m_t}\theta\left(t,Y_{\cdot}\right)X_tdt+\sqrt{m_t}\lambda\left(t,Y_{\cdot}\right)X_tdW_t.
\end{equation}
\end{definition}

\begin{remark}\label{threestars}
If $X^{\overline{\pi}}_t$ is a 
{portfolio value} process then, taking $\phi_t=\sqrt{m_t}$,
we can see that 
$$
\left(\Omega^{\overline{\pi}},\mathcal{F}^{\overline{\pi}},
\left\{\mathcal{F}^{\overline{\pi}}_t\right\}_{t\geqslant 0},
\mathbb{P}^{\overline{\pi}},X^{\overline{\pi}}_t,
Y^{\overline{\pi}}_t,\phi_t,\left(B^{\overline{\pi}}_t,W^{\overline{\pi}}_t\right),(x,y)\right)
$$
belongs to $\Pi$.
\end{remark}

\begin{remark} \label{exprp}
Suppose that we are given a $\overline{\pi}\in\overline{\Pi}$ i.e. there is a standard $d+1$-dimensional 
Brownian motion $\left(B,W\right)$ on 
$\left(\Omega^{\overline{\pi}},\mathcal{F}^{\overline{\pi}},\left\{\mathcal{F}^{\overline{\pi}}_t\right\}_{t\geqslant 0},\mathbb{P}^{\overline{\pi}}\right)$ and processes 
$X^{\overline{\pi}}_t, Y^{\overline{\pi}}_t, m^{\overline{\pi}}_t,l^{\overline{\pi}}_t$ such that equations (\ref{eq3}) and (\ref{eq4}) hold.
Define the continuous semimartingale 
$M^{\overline{\pi}}_t:=
\int_0^t \sqrt{m_s^{\overline{\pi}_t}}\lambda_s\left(Y_{\cdot}\right)dW_s+
\int_0^t l_s^{\overline{\pi}_t}\theta_s\left(Y_{\cdot}\right)ds$. Then we can rewrite equation (\ref{eq4}) as
\begin{equation}\label{abrafaxe}
X_t = x + \int_0^t X_sdM^{\overline{\pi}}_s.
\end{equation}
Equation \eqref{abrafaxe} has a unique strong solution on the given
probability space, given the stochastic exponential 
\begin{equation}
X^{\overline{\pi}}_t= x\exp\left\{\int_0^t\sqrt{m_s}\lambda_s\left(Y^{\overline{\pi}}_{\cdot}\right)
dW^{\overline{\pi}}_s+\int_0^t \left[l_s\theta_s\left(Y^{\overline{\pi}}_{\cdot}\right)-
\frac{1}{2}m_s \lambda_s^2\left(Y^{\overline{\pi}}_{\cdot}\right)\right]ds\right\},
\end{equation}
and this process is positive $\mathbb{P}^{\overline{\pi}}$-a.s.  
\end{remark}

\subsection{Krylov's theorem and related results}

\begin{lemma}\label{rcoef}
Let $M=\max\left\{\| \kappa\|_{\infty},\|\lambda\|_{\infty},\|\theta\|_{\infty},\|\nu\|_{\infty} \right\}$. 
The set $A_t\left(x_{\cdot},y_{\cdot}\right)$ is convex, closed and bounded, where the bound depends on $M$ and $x_t$ only.
\end{lemma}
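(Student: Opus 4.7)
The plan is to realize $A_t(x_\cdot,y_\cdot)$ as the image under a fixed affine map of the parameter set
\[
K := \{(m,l) \in \mathbb{R}^2 : 0 \leqslant m \leqslant 1,\ 0 \leqslant l \leqslant \sqrt{m}\},
\]
and then read off each of the three properties from properties of $K$ and of the map. Concretely, with $t$, $x_\cdot$, $y_\cdot$ fixed, set
\[
\Phi_t(m,l) := \left(\begin{pmatrix} \tfrac12 \kappa\kappa^\star(t,y_\cdot) & 0 \\ 0 & \tfrac12 m\,\lambda^2(t,y_\cdot)\,x_t^2 \end{pmatrix},\, \begin{pmatrix} \nu(t,y_\cdot) \\ l\,\theta(t,y_\cdot)\,x_t \end{pmatrix}\right),
\]
so that $A_t(x_\cdot,y_\cdot) = \Phi_t(K)$. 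The map $\Phi_t : \mathbb{R}^2 \to \mathbb{A}$ is affine (linear in $(m,l)$ up to a constant term).

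For convexity, I would show that $K$ itself is convex and invoke the fact that affine images of convex sets are convex. Convexity of $K$ follows because the function $m \mapsto \sqrt{m}$ is concave on $[0,1]$: for $(m_1,l_1),(m_2,l_2) \in K$ and $\alpha \in [0,1]$,
\[
\alpha l_1 + (1-\alpha) l_2 \leqslant \alpha\sqrt{m_1} + (1-\alpha)\sqrt{m_2} \leqslant \sqrt{\alpha m_1 + (1-\alpha) m_2},
\]
so the convex combination lies in $K$. This is the one spot where a small computation is actually needed; everything else is direct.

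For closedness and boundedness, I would use that $K$ is compact and $\Phi_t$ is continuous, hence $A_t(x_\cdot,y_\cdot) = \Phi_t(K)$ is compact, and in particular closed. For the explicit bound, Assumption \ref{acoef1} gives uniform bounds on $\kappa,\lambda,\theta,\nu$ by $M$, and together with $0\leqslant m,l \leqslant 1$ one reads off
\[
\|a\| \leqslant \tfrac12 M^2(1 + x_t^2), \qquad \|b\| \leqslant M(1 + x_t),
\]
up to a dimensional constant, so the bound depends only on $M$ and $x_t$, as claimed.

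I do not anticipate a serious obstacle: the lemma is essentially a bookkeeping statement about the parametrization in Definition \ref{def3}. The only conceptual point is the convexity of $K$, which hinges on the concavity of the square root; the precise form of the constraint $l \leqslant \sqrt{m}$ (rather than, say, $l \leqslant m$) is chosen so that this property holds, which in turn is what makes the relaxation $\overline{\Pi}$ convex-valued and suitable for Krylov's machinery.
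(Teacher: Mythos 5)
Your proof is correct and follows essentially the same route as the paper's: the convexity reduces in both cases to the inequality $\mu l_1+(1-\mu)l_2\leqslant\sqrt{\mu m_1+(1-\mu)m_2}$ from concavity of the square root, and the bound comes from the same uniform estimates on the coefficients. Your packaging via the compact convex parameter set $K$ and the affine map $\Phi_t$ is a slightly cleaner way of organizing the identical computation (and supplies the closedness argument the paper merely asserts as clear), but it is not a different method.
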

\begin{proof}
Notation $|\cdot|$ will refer to Euclidean norms of varying dimensions. For simplicity, we assume $d=1$.
Notice that 
\[
|\left(\sigma_{t},b_{t}\right)|=\left(\kappa_{t}^{2}\left(y_{\cdot}\right)+m^{2}\lambda_{t}^{2}
\left(y_{\cdot}\right)x_{t}^{2}+\nu_{t}^{2}\left(y_{\cdot}\right)+l^{2}\theta_{t}^{2}\left(y_{\cdot}\right)
x_{t}^{2}\right)^{1/2},
\]
hence 
\begin{equation}
|\left(\sigma_{t},b_{t}\right)|\leqslant\sqrt{2}M\left(1+\left|x_{t}\right|\right). \label{bndeq}
\end{equation}

%\begin{equation}
%|\left(\sigma_{t},b_{t}\right)|\leqslant\left(2M^{2}+2M^{2}\left|x_{t}\right|^{2}\right)^{1/2}
%\leqslant\sqrt{2}M\left(1+\left|x_{t}\right|\right). \label{bndeq}
%\end{equation}

It is clear that the set is closed. For a fixed $t$, $x_{\cdot}$ and $y_{\cdot}$ the set is bounded. 
Indeed, let $\left(a_{t},b_{t}\right)\in A_{t}\left(x_{\cdot},y_{\cdot}\right)$. Then 
we have 
\[
|\left(a_{t},b_{t}\right)|=\left(\frac{1}{4}\cdot\left(\kappa_{t}^{2}\left(y_{\cdot}\right)\right)^2+
\frac{1}{4}\cdot \left(m\cdot\lambda_{t}^{2}\left(y_{\cdot}\right)\cdot x_{t}^{2}\right)^2+\nu_{t}^{2}
\left(y_{\cdot}\right)+l^{2}\theta_{t}^{2}\left(y_{\cdot}\right)x_{t}^{2}\right)^{1/2},
\]
so
\[
|\left(a_{t},b_{t}\right)|\leqslant\left(\frac{1}{4}M^4+\frac{1}{4}\cdot M^4\cdot\left( x_{t}^{2}\right)^{2}+
M^2+M^{2}x_{t}^{2}\right)^{1/2},
\]
which leads to $|\left(a_{t},b_{t}\right)|\leqslant\frac{1}{2}\left(M+1\right)^2+M^2x^{2}_t$.

In particular, \begin{equation}\label{star}
\|A_{t}\left(x_{\cdot},y_{\cdot}\right)\|:=
\max\left\{ \left| \left(a_{t},b_{t}\right)\right| :
\left(a_{t},b_{t}\right)\in A_{t}\left(x_{\cdot},y_{\cdot}\right)\right\} \leqslant K\left(1+\left|x_{t}\right|^2
\right),
\end{equation}
for some $K\geq 0$.
%Thus, assumption 3.1 $i)$ in Krylov \cite{Kry01}  does not hold but a ''quadratic growth condition''.

The set $A_t\left(x_{\cdot},y_{\cdot}\right)$ is also convex.
Indeed, let $\left(\alpha,b\right),\left(\gamma,c\right)\in A_{t}\left(x_{\cdot},y_{\cdot}\right)$
then, for $0\leq \mu\leq 1$, 
\begin{gather*}
\mu\left(\alpha,b\right)+(1-\mu)\left(\gamma,c\right)=\\
\left(\left(\begin{array}{cc}
\frac{1}{2}\kappa_{t}^{2}\left(y_{\cdot}\right) & 0\\
0 & \frac{1}{2}\left(\mu m+(1-\mu)n\right)\lambda_{t}^{2}\left(y_{\cdot}\right)x_{t}^{2}
\end{array}\right),\left(\begin{array}{c}
\nu_{t}\left(y_{\cdot}\right)\\
\left(\mu l+(1-\mu)p\right)\theta_{t}\left(y_{\cdot}\right)x_{t}
\end{array}\right)\right)
\end{gather*}
 with $0\leqslant m,n\leqslant 1$, $0\leqslant l\leqslant\sqrt{m}$ and $0\leqslant p\leqslant\sqrt{n}$. Clearly,
$\mu l+(1-\mu)p\leqslant\sqrt{\mu m+\left(1-\mu\right)n}$, by concavity of the square root function.
\end{proof}

 In order to deal with (semi)continuity issues related to the family of sets defined in Definitions \ref{def3} and 
 \eqref{Atdef}, the support functions of sets $A_t\left(x_{\cdot},y_{\cdot}\right)$ are now considered.
We denote for all $u\in\mathbb{R}^{(d+1)(d+1)}$, $v\in\mathbb{R}^{d+1}$ and $t\in\left[0,T\right]$,
\begin{equation}
F_{t}\left(x_{\cdot},y_{\cdot}\right)\left(u,v\right)=\max\left\{ \sum_{i,j}a_{ij}u_{ij}+\sum_{j}b_{j}v_{j}\,:\,\left(a,b\right)\in A_{t}\left(x_{\cdot},y_{\cdot}\right)\right\}.
\end{equation}

Under Assumption \ref{acoef1}, for fixed $t\geqslant 0$ and $\left(u,v\right)$,
the support function $(x_{\cdot},y_{\cdot})\to F_{t}\left(x_{\cdot},y_{\cdot}\right)\left(u,v\right)$ is 
continuous, since we are fixing $t$, restricting the trajectories to $[0,t]$, and thus the 
$\max$ is taken over a compact set by Lemma \ref{rcoef}.
In particular, the set $A_t\left(x_{\cdot},y_{\cdot}\right)$ is upper-semicontinuous in the sense 
of Assumption 3.1 iii) in \cite{Kry01}. It is also clear that, 
for fixed $u,v\in \mathbb{A}$, $F_t\left(u,v,x_{\cdot},y_{\cdot}\right)$ is a Borel function 
on $\left[0,T\right]\times C\left(\left[0,T\right];\mathbb{R}^{d+1}\right)$.

We now present some moment estimates which will, in particular, guarantee tightness for the family
of the laws of $(X^{\overline{\pi}},Y^{\overline{\pi}})$, $\overline{\pi}\in\overline{\Pi}$ in $C\left(\left[0,T\right];\mathbb{R}^{d+1}\right)$.

\begin{proposition} \label{mestim1}
For the ease of reference we denote $\zeta_{t}=\left(Y_t,X_t\right)$. Under Assumption \ref{acoef1}, for any $m>0$, 
\begin{equation} \label{totbnd}
\sup_{\overline{\pi}\in\overline{\Pi}}\mathbf{E}_{\overline{\pi}}\left[\sup_{t\leqslant T}| \zeta^{\overline{\pi}}_t|^m\right]<\infty.
\end{equation}
\end{proposition}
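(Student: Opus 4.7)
The plan is to bound $|\zeta_t|^m \leq 2^{(m/2-1)\vee 0}\bigl(|Y_t|^m + |X_t|^m\bigr)$, handle the $Y$ and $X$ components separately, and reduce to exponents $m\geq 2$ via Jensen's inequality so that the Burkholder-Davis-Gundy (BDG) inequality applies in its usual form. The key structural fact exploited throughout is that, thanks to the constraints $0\leq m_t\leq 1$ and $0\leq l_t\leq\sqrt{m_t}\leq 1$ built into Definition \ref{def3}, the effective coefficients of both \eqref{eq3} and \eqref{eq4} are uniformly bounded by a constant depending only on $M$, independently of the relaxation parameters.

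For the $Y$-component, equation \eqref{eq3} does not involve the relaxation parameters at all, and its coefficients $\nu,\kappa$ are uniformly bounded by $M$ thanks to Assumption \ref{acoef1}. A direct application of BDG to the martingale $\int_0^\cdot \kappa_s(Y_\cdot)\,dB_s$, combined with the pathwise estimate $\bigl|\int_0^t \nu_s(Y_\cdot)\,ds\bigr|\leq MT$, yields $\mathbf{E}_{\overline{\pi}}\bigl[\sup_{t\leq T}|Y_t|^m\bigr]\leq C(m,M,T,|y|)$, uniformly in $\overline{\pi}\in\overline{\Pi}$. No Gronwall bootstrap is necessary because the coefficients do not grow in $Y$.

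For the $X$-component, Remark \ref{exprp} guarantees $X^{\overline{\pi}}_t>0$ a.s., so It\^o's formula applied to $x\mapsto x^p$ for $p\geq m$ gives
\[
dX_t^p = p X_t^p\Bigl[l_t\theta_t(Y_\cdot) + \tfrac{p-1}{2}m_t\lambda_t^2(Y_\cdot)\Bigr]dt + p X_t^p\sqrt{m_t}\,\lambda_t(Y_\cdot)\,dW_t.
\]
The bracket in the drift is bounded in modulus by $M+\tfrac{p-1}{2}M^2$, uniformly in $\overline{\pi}$. I then localize with $\tau_n:=\inf\{t:X_t> n\}\wedge T$ so that the stochastic integral becomes a true martingale; taking expectations and using Fubini reduces the problem to the linear integral inequality $\mathbf{E}_{\overline\pi}[X_{t\wedge\tau_n}^p]\leq x^p + K_p\int_0^t \mathbf{E}_{\overline\pi}[X_{s\wedge\tau_n}^p]\,ds$ for a constant $K_p$ depending only on $p$ and $M$. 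Gronwall's lemma, followed by Fatou as $n\to\infty$, produces $\sup_{\overline{\pi}}\mathbf{E}_{\overline\pi}[X_t^p]\leq x^p e^{K_p T}$ for every $p>0$.

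To upgrade from $\mathbf{E}[X_t^p]$ to $\mathbf{E}[\sup_{t\leq T}X_t^p]$, I write $X_{t\wedge\tau_n}^p = x^p + A_t^{(n)} + N_t^{(n)}$ with $A^{(n)}$ the absolutely continuous drift term and $N^{(n)}$ the martingale part. Doob's $L^2$ inequality, together with the quadratic-variation bound $\langle N^{(n)}\rangle_T\leq p^2 M^2\int_0^{T\wedge\tau_n}X_s^{2p}\,ds$ and the uniform $L^{2p}$ estimate obtained in the previous step applied with exponent $2p$, give a uniform bound on $\mathbf{E}_{\overline\pi}[\sup_{t\leq T}X_{t\wedge\tau_n}^p]$, which a final Fatou step transfers to $X_t$ itself. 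I do not anticipate any serious obstacle: the argument is entirely standard once one observes that $l_t\leq\sqrt{m_t}\leq 1$ tames the effective drift and diffusion of $X_t^p$, the only bookkeeping needed being the localization to legitimize the martingale properties of the stochastic integrals and the systematic passage through $2p$-moments before applying Doob to obtain the pathwise supremum.
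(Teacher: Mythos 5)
Your proof is correct, but it departs from the paper's argument in how the $X$-component is handled. The paper treats \eqref{eq3}--\eqref{eq4} as a generic It\^o system with coefficients of linear growth in $x_t$ (the bound \eqref{bndeq}), applies Jensen and Burkholder--Davis--Gundy to the running supremum $\sup_{s\leqslant t}|\zeta_s|^m$ directly, and closes with a single Gronwall step on $t\mapsto \mathbf{E}_{\overline{\pi}}\bigl[\sup_{s\leqslant t}\|\zeta_s\|^m\bigr]$; it never uses the positivity or multiplicative structure of $X$. You instead observe that the $Y$-equation has uniformly bounded coefficients, so no Gronwall is needed there at all, and for $X$ you exploit the stochastic-exponential form of Remark \ref{exprp}: It\^o applied to $x\mapsto x^p$ turns the constraints $0\leqslant l_t\leqslant\sqrt{m_t}\leqslant 1$ into a uniformly bounded effective drift for $X^p$, Gronwall then acts on the pointwise moments $\mathbf{E}_{\overline{\pi}}[X_t^p]$, and a separate Doob/quadratic-variation step (passing through $2p$-moments) recovers the supremum. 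What your route buys is explicit exponential constants $x^p e^{K_p T}$ and a cleaner handling of the a priori finiteness issue: you localize with $\tau_n$ before invoking Gronwall, whereas the paper applies Gronwall to $\mathbf{E}_{\overline{\pi}}\bigl[\sup_{s\leqslant t}\|\zeta_s\|^m\bigr]$ without first justifying that this quantity is finite (a standard but unstated localization). What it costs is generality --- your argument leans on $X>0$ and the specific geometric dynamics, while the paper's works verbatim for any coefficients satisfying \eqref{star}-type growth --- plus the extra bookkeeping of the $2p$-moment detour, which the paper avoids by taking the supremum inside the BDG estimate from the start. Both arguments are sound.
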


\begin{proposition}\label{tgh}
Under Assumption \ref{acoef1}, let $\overline{\pi}\in\overline{\Pi}$ and $\left(Y_{t}^{\overline{\pi}},X_{t}^{\overline{\pi}}\right)$ its associated processes solving (\ref{eq3}) and (\ref{eq4}).
Then, there exists a constant $K>0$ not depending on $\overline{\pi}\in\overline{\Pi}$, such that for any 
$\eta>0$ and $s,t\in\left[0,T\right]$,   
\begin{equation}
\mathbf{E}_{\overline{\pi}}\|\zeta_{t}-\zeta_{s}\|^{\eta}\leqslant K\,|t-s|^{\frac{\eta}{2}}.\label{eq:tightnessK}
\end{equation}
 \end{proposition}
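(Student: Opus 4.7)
The plan is the following standard BDG-plus-moment-bound argument applied componentwise to $\zeta_t=(Y_t,X_t)$, exploiting that Proposition \ref{mestim1} already gives uniform control on $\mathbf{E}_{\overline{\pi}}[\sup_{t\leqslant T}|X_t|^m]$ for every $m>0$. Throughout I fix $0\leqslant s<t\leqslant T$ and write the increment $\zeta_t-\zeta_s$ using \eqref{eq3}--\eqref{eq4}.

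First I handle $Y$. Since $\nu$ and $\kappa$ are bounded by $M$ (Assumption \ref{acoef1}) independently of $\overline{\pi}$, the drift term satisfies $|\int_s^t \nu_r(Y_\cdot)\,dr|\leqslant M|t-s|$ deterministically. For the stochastic integral I apply the Burkholder--Davis--Gundy inequality for a given exponent $\eta\geqslant 2$:
\[
\mathbf{E}_{\overline{\pi}}\Bigl|\int_s^t\kappa_r(Y_\cdot)\,dB_r\Bigr|^{\eta}
\leqslant C_\eta\,\mathbf{E}_{\overline{\pi}}\Bigl(\int_s^t|\kappa_r(Y_\cdot)|^2\,dr\Bigr)^{\eta/2}
\leqslant C_\eta M^{\eta}|t-s|^{\eta/2}.
\]
Combining and using $|t-s|\leqslant T^{1/2}|t-s|^{1/2}$ yields $\mathbf{E}_{\overline{\pi}}|Y_t-Y_s|^{\eta}\leqslant K_1|t-s|^{\eta/2}$ with $K_1$ depending only on $M$, $T$, $\eta$.

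Next I handle $X$. The drift is $\int_s^t l_r\theta_r(Y_\cdot)X_r\,dr$; since $0\leqslant l_r\leqslant 1$ and $|\theta_r|\leqslant M$, Jensen's inequality applied to the probability measure $dr/(t-s)$ on $[s,t]$ gives, for $\eta\geqslant 1$,
\[
\Bigl|\int_s^t l_r\theta_r(Y_\cdot)X_r\,dr\Bigr|^{\eta}
\leqslant M^{\eta}(t-s)^{\eta-1}\int_s^t|X_r|^{\eta}\,dr
\leqslant M^{\eta}T^{\eta/2}(t-s)^{\eta/2}\sup_{r\leqslant T}|X_r|^{\eta}.
\]
For the martingale part, since $0\leqslant m_r\leqslant 1$, the BDG inequality followed by the same Jensen step (now with exponent $\eta/2$ on $[s,t]$, valid for $\eta\geqslant 2$) gives
\[
\mathbf{E}_{\overline{\pi}}\Bigl|\int_s^t\sqrt{m_r}\lambda_r(Y_\cdot)X_r\,dW_r\Bigr|^{\eta}
\leqslant C_\eta M^{\eta}\,\mathbf{E}_{\overline{\pi}}\Bigl(\int_s^tX_r^2\,dr\Bigr)^{\eta/2}
\leqslant C_\eta M^{\eta}(t-s)^{\eta/2}\mathbf{E}_{\overline{\pi}}\sup_{r\leqslant T}|X_r|^{\eta}.
\]
Taking expectations, applying Proposition \ref{mestim1} to bound $\mathbf{E}_{\overline{\pi}}\sup_{r\leqslant T}|X_r|^{\eta}$ by a constant independent of $\overline{\pi}$, and summing the two contributions produces $\mathbf{E}_{\overline{\pi}}|X_t-X_s|^{\eta}\leqslant K_2|t-s|^{\eta/2}$ uniformly in $\overline{\pi}$, for every $\eta\geqslant 2$.

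Finally, combining the two bounds (using the elementary $\|(a,b)\|^{\eta}\leqslant 2^{\eta}(|a|^{\eta}+|b|^{\eta})$) gives the claim for all $\eta\geqslant 2$. For $0<\eta<2$ the conclusion follows from Jensen's inequality applied to the concave map $x\mapsto x^{\eta/2}$: $\mathbf{E}_{\overline{\pi}}\|\zeta_t-\zeta_s\|^{\eta}\leqslant(\mathbf{E}_{\overline{\pi}}\|\zeta_t-\zeta_s\|^{2})^{\eta/2}\leqslant K_2^{\eta/2}|t-s|^{\eta/2}$. The only delicate point is the presence of the factor $X_r$ in the $X$-integrands, which prevents a trivial bound by $|t-s|$; this is absorbed precisely because the uniform moment estimate of Proposition \ref{mestim1} provides a $\overline{\pi}$-independent control of $\mathbf{E}_{\overline{\pi}}\sup_{r\leqslant T}|X_r|^{\eta}$ for every exponent, which is why I expect that to be the main (but already resolved) obstacle.
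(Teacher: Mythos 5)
Your proposal is correct and follows essentially the same route as the paper's own proof: componentwise estimates on $Y$ and $X$, boundedness of the coefficients for the drift, Burkholder--Davis--Gundy plus Jensen for the stochastic integrals, and Proposition \ref{mestim1} to absorb the factor $X_r$ uniformly in $\overline{\pi}$. Your explicit treatment of the case $0<\eta<2$ via Jensen's inequality is a small tidy addition that the paper leaves implicit.
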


See the Appendix for a standard proof of both propositions above.
A well-known result on tightness of measures on 
$C\left([0,T];\mathbb{R}^{d+1}\right)$ gives the following corollary. This could also be obtained
by the method of Theorem 3.2 in \cite{krli}.

\begin{corollary}\label{relcomp}
Let Assumption \ref{acoef1} be in force.
Let $\left\{ \overline{\pi}_{n}\right\} \subset\overline{\Pi}$. The set of laws of the process 
$\zeta_{\cdot}^{\overline{\pi}_{n}}$ on $C\left(\left[0,T\right];\mathbb{R}^{d+1}\right)$ is 
relatively weakly compact. \hfill $\Box$
\end{corollary}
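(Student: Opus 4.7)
The plan is to apply Prokhorov's theorem on the Polish space $C([0,T];\mathbb{R}^{d+1})$, so it suffices to establish tightness of the family of laws $\{\mathrm{Law}(\zeta_{\cdot}^{\overline{\pi}_n})\}_n$. Tightness on this path space follows from a standard Kolmogorov--Chentsov-type criterion, which has two ingredients: tightness of the one-dimensional distributions at some fixed time, and a uniform modulus-of-continuity estimate via moments of increments.

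First, the one-dimensional distribution at $t=0$ is trivially tight because every $\zeta_0^{\overline{\pi}_n}=(y,x)$ is deterministic and fixed; alternatively, Proposition \ref{mestim1} applied at any fixed time yields uniform moment bounds $\sup_n \mathbf{E}_{\overline{\pi}_n}|\zeta_t^{\overline{\pi}_n}|^m < \infty$, which by Markov's inequality gives tightness of $\{\mathrm{Law}(\zeta_t^{\overline{\pi}_n})\}_n$ on $\mathbb{R}^{d+1}$ for each $t$.

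Second, to control oscillations, I would apply Proposition \ref{tgh} with a choice of $\eta$ strictly greater than $2$, say $\eta = 4$. This gives
\[
\mathbf{E}_{\overline{\pi}_n}\|\zeta_t^{\overline{\pi}_n}-\zeta_s^{\overline{\pi}_n}\|^{4} \leqslant K\,|t-s|^{2},
\]
with $K$ independent of $n$. Since the exponent on $|t-s|$ is of the form $1+\beta$ with $\beta=1>0$, the classical Kolmogorov tightness criterion (see e.g.\ Billingsley, \emph{Convergence of Probability Measures}, Theorem 12.3, extended componentwise to $\mathbb{R}^{d+1}$) ensures that for every $\varepsilon>0$ there exists $\delta>0$ such that
\[
\sup_{n}\mathbb{P}^{\overline{\pi}_n}\!\left(\sup_{|t-s|<\delta}\|\zeta_t^{\overline{\pi}_n}-\zeta_s^{\overline{\pi}_n}\|>\varepsilon\right)<\varepsilon.
\]

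Combining the tight marginal at $t=0$ with the uniform equicontinuity in probability just obtained, the Arzel\`a--Ascoli characterization of compact subsets of $C([0,T];\mathbb{R}^{d+1})$ shows that the family $\{\mathrm{Law}(\zeta_{\cdot}^{\overline{\pi}_n})\}_n$ is tight, and hence relatively weakly compact by Prokhorov. No step here is really the obstacle: all the analytic work has already been done in Propositions \ref{mestim1} and \ref{tgh}, and the corollary is essentially a packaging of those two estimates via a standard tightness criterion.
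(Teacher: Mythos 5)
Your proof is correct and follows exactly the route the paper intends: the paper gives no argument beyond citing ``a well-known result on tightness of measures on $C([0,T];\mathbb{R}^{d+1})$,'' which is precisely the Kolmogorov--Prokhorov criterion you invoke, fed with the uniform increment bound of Proposition \ref{tgh} (any $\eta>2$ works, e.g.\ $\eta=4$) and the fixed deterministic initial value. Nothing to add.
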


Now we restate Theorem 3.2 of \cite{Kry01} in our setting, which will provide weak compactness
of the distributions of weak controls.

\begin{theorem} \label{thm3.2} Let Assupmtion \ref{acoef1} be in force.
Denote by $\mathbb{Q}^{\overline{\pi}}$ the distribution of $\zeta^{\overline{\pi}}_{\cdot}$ on 
$C\left(\left[0,T\right];\mathbb{R}^{d+1}\right)$. Then the set $\left\{\mathbb{Q}^{\overline{\pi}} \,:\,\overline{\pi}\in\overline{\Pi}\right\}$ is 
sequentially weakly compact: for any sequence 
$\overline{\pi}_n\in\overline{\Pi}$ there is a subsequence $n(m)\rightarrow\infty$ as $m\rightarrow\infty$ and a
$\overline{\pi}\in\overline{\Pi}$ such that for any real-valued, bounded, continuous 
function $H(x_\cdot)$ on $C\left(\left[0,T\right];\mathbb{R}^{d+1}\right)$ we have 
\begin{equation}\label{eqt32}
\lim_{m\rightarrow\infty} E^{\nu_m} H\left(\zeta^{\nu_m}_{\cdot}\right)=E^{\pi}H\left(\zeta^{\pi}_{\cdot}\right),
\end{equation}
where $\nu_m=\pi_{n(m)}$.
\end{theorem}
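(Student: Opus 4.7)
The plan is to deduce the statement from Theorem 3.2 of \cite{Kry01} by checking that the set-valued map $A_t(x_\cdot,y_\cdot)$ and the auxiliary controls in Definition \ref{bPi} fit its framework, and then combining this with the tightness already established in Corollary \ref{relcomp}.

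First, by Corollary \ref{relcomp}, the laws $\mathbb{Q}^{\overline{\pi}_n}$ form a relatively weakly compact subset of probability measures on $C([0,T];\mathbb{R}^{d+1})$. Extracting a subsequence $\nu_m=\overline{\pi}_{n(m)}$, we obtain a limit probability measure $\mathbb{Q}$ on $C([0,T];\mathbb{R}^{d+1})$ such that $\mathbb{Q}^{\nu_m}\Rightarrow\mathbb{Q}$.

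Second, I would verify that the hypotheses of Krylov's Theorem 3.2 are in force for the family $A_t(x_\cdot,y_\cdot)$. The structural conditions have essentially been collected already: Lemma \ref{rcoef} furnishes convexity, closedness, and the growth bound $\|A_t(x_\cdot,y_\cdot)\|\leqslant K(1+|x_t|^2)$ in \eqref{star}; the paragraph following it shows that for each fixed $t$, $u$, $v$, the support function $(x_\cdot,y_\cdot)\mapsto F_t(x_\cdot,y_\cdot)(u,v)$ is continuous and jointly Borel measurable, which is exactly the upper-semicontinuity and measurability required by Assumption 3.1 iii) of \cite{Kry01}; and Proposition \ref{mestim1} provides the uniform moment control of the processes $\zeta^{\overline{\pi}}$ needed to handle the unbounded state space. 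Definition \ref{bPi} (together with Remark \ref{ordiniaire} and Remark \ref{landm}) ensures that our auxiliary controls are precisely the martingale-problem solutions in \cite{Kry01}: pairs $(a_t,b_t)$ that are $\mathcal{F}_t$-adapted, take values in $A_t(X_\cdot,Y_\cdot)$, and drive equation \eqref{atbt}.

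Third, Krylov's Theorem 3.2 then asserts that the limit law $\mathbb{Q}$ is itself the distribution of $\zeta^{\overline{\pi}}$ for some $\overline{\pi}\in\overline{\Pi}$: it constructs a filtered probability space carrying a Brownian motion, processes $(X_t,Y_t)$, and a suitable $(a_t,b_t)$ taking values in $A_t(X_\cdot,Y_\cdot)$ such that \eqref{atbt} holds and the law of $\zeta^{\overline{\pi}}$ equals $\mathbb{Q}$. Once this is granted, \eqref{eqt32} is nothing but the weak convergence $\mathbb{Q}^{\nu_m}\Rightarrow\mathbb{Q}^{\overline{\pi}}$ tested against the bounded continuous functional $H$.

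The main obstacle is the second step: one must check carefully that the hypotheses of \cite{Kry01} match our set-valued map, in particular the quadratic growth in \eqref{star} (which forces the slight extension of Krylov's bounded-coefficient setting via the moment bounds of Proposition \ref{mestim1}) and the continuity of the support function $F_t$ in the path argument, which uses Assumption \ref{acoef1} in an essential way. Once these are in place, the conclusion is a direct translation of Krylov's result into our notation.
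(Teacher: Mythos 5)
Your proposal is correct and follows essentially the same route as the paper: verify Krylov's Assumption 3.1 ii) and iii) via Lemma \ref{rcoef} and the continuity of the support function, observe that the linear growth condition 3.1 i) fails but is only needed for tightness, which Corollary \ref{relcomp} (via Proposition \ref{mestim1}) supplies independently, and then let Krylov's argument identify the weak limit as the law of some $\overline{\pi}\in\overline{\Pi}$. No substantive differences.
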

\begin{proof} It follows from the above discussions that Assumption 3.1 ii) and iii) in \cite{Kry01}
hold in the present case. One does not have Assumption 3.1 i) of \cite{Kry01} though
(linear growth condition on $\|A_t\left(X_{\cdot},Y_{\cdot}\right)\|$),
there is a quadratic growth instead, see \eqref{star}. But, as Corollary
\ref{relcomp} shows, this is still sufficient to get tightness (and hence relative weak compactness)
of the sequence $\mathbb{Q}^{\overline{\pi}_n}$ in our setting. Then one can check that the proof of Theorem 3.2
in \cite{Kry01} goes through and we can conclude.
\end{proof}

The next lemma shows that, to any auxiliary control $\overline{\pi}$ in the sense of Definition \ref{bPi},
we can associate an \textit{investment strategy} (in the sense of Definition \ref{def1}) 
with higher value function.

\begin{lemma}\label{pistar}
Let $$
\overline{\pi}=\left(\Omega^{\overline{\pi}},
\mathcal{F}^{\overline{\pi}},\left\{\mathcal{F}^{\overline{\pi}}_t\right\}_{t\geqslant0},
\mathbb{P}^{\overline{\pi}},\left(X^{\overline{\pi}},
Y^{\overline{\pi}}\right),\left(B^{\overline{\pi}},W^{\overline{\pi}}\right),(x,y)\right)\in\overline{\Pi}.
$$
Then a solution to  
\begin{equation}
dY_t=\nu_t\left(Y_{\cdot}\right)dt+\kappa_t\left(Y_{\cdot}\right)dB_t,\quad Y_0=y, 
\end{equation}
\begin{equation}\label{cheney}
d\hat{X}_t=\sqrt{m_t}\theta_t\left(Y_{\cdot}\right)\hat{X}_tdt+\sqrt{m_t}\lambda_t\left(Y_{\cdot}\right)\hat{X}_tdW_t,
\quad X_0=x,
\end{equation}
exists on the same filtered probability space and $\hat{X}_T\geq X_T^{\overline{\pi}}$
a.s. Furthermore, $\hat{X}_t$ is a portfolio value process.
\end{lemma}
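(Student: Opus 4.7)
The plan is to observe that equation \eqref{cheney} is a linear SDE in $\hat{X}$ with bounded adapted coefficients, so we can write down its unique strong solution explicitly via the stochastic exponential (just as in Remark \ref{exprp}), and then compare this solution pathwise to $X^{\overline{\pi}}$ by dividing the two exponentials and using $\sqrt{m_t} \geq l_t$ together with Assumption \ref{positv}.

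First I would extract from $\overline{\pi}$ the real-valued processes $l_t, m_t$ as in Remark \ref{landm}; these are $\mathcal{F}^{\overline{\pi}}_t$-adapted with $0 \leq l_t \leq \sqrt{m_t}$ and $0 \leq m_t \leq 1$. The process $Y = Y^{\overline{\pi}}$ already satisfies the $Y$-equation on $(\Omega^{\overline{\pi}},\mathcal{F}^{\overline{\pi}},\{\mathcal{F}^{\overline{\pi}}_t\},\mathbb{P}^{\overline{\pi}})$ with respect to $B = B^{\overline{\pi}}$, so nothing needs to be done there. For $\hat{X}$, since $\theta_s(Y_\cdot)$ and $\lambda_s(Y_\cdot)$ are bounded (Assumption \ref{acoef1}) and $\sqrt{m_s}\leq 1$, the stochastic integrals are well-defined and equation \eqref{cheney} admits the unique strong solution
\[
\hat{X}_t = x\exp\left\{\int_0^t \sqrt{m_s}\lambda_s(Y_\cdot) dW_s + \int_0^t\left[\sqrt{m_s}\theta_s(Y_\cdot) - \frac{1}{2}m_s\lambda_s^2(Y_\cdot)\right]ds\right\}
\]
on the same filtered probability space, with $W = W^{\overline{\pi}}$. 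By construction the diffusion coefficient is $\sqrt{m_t}\lambda_t(Y_\cdot)\hat{X}_t$ and the drift is $\sqrt{m_t}\theta_t(Y_\cdot)\hat{X}_t$, so taking $\phi_t := \sqrt{m_t}$ shows that $\hat{X}$ is a portfolio value process in the sense of Definition \ref{prtf}.

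For the comparison, I would write out the exponential representation of $X^{\overline{\pi}}$ from Remark \ref{exprp} and divide to obtain
\[
\frac{\hat{X}_t}{X^{\overline{\pi}}_t} = \exp\left\{\int_0^t (\sqrt{m_s} - l_s)\theta_s(Y_\cdot)\,ds\right\}.
\]
Since $\sqrt{m_s} - l_s \geq 0$ by condition (e) of Definition \ref{bPi} and $\theta_s(Y_\cdot) \geq 0$ by Assumption \ref{positv}, the integrand is nonnegative, so the ratio is $\geq 1$ a.s. at every $t\in[0,T]$, in particular $\hat{X}_T \geq X^{\overline{\pi}}_T$ a.s.

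There is no real obstacle here: Assumption \ref{positv} ($\theta \geq 0$) is precisely what is needed to turn the inequality $l_s \leq \sqrt{m_s}$ into a pathwise comparison, and the measurability/adaptedness of $l_t, m_t$ claimed in Remark \ref{landm} is what lets us regard \eqref{cheney} as a genuine linear SDE on the given filtered space. The only point worth being slightly careful about is that $X^{\overline{\pi}}_t > 0$ a.s. (so that the division above is legitimate), which is already ensured by the stochastic exponential representation recalled in Remark \ref{exprp}.
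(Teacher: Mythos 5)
Your proof is correct and is essentially the paper's own argument in a slightly different presentation: the paper defines $Z_t=\exp\bigl(-\int_0^t(l_s-\sqrt{m_s})\theta_s(Y_\cdot)\,ds\bigr)$ and sets $\hat{X}_t:=Z_tX_t^{\overline{\pi}}$, verifying \eqref{cheney} by It\^o's formula, whereas you solve \eqref{cheney} directly by the stochastic exponential and then divide by the exponential representation of $X^{\overline{\pi}}$ from Remark \ref{exprp} --- but the quotient you compute is exactly the paper's $Z_t^{-1}$ (equivalently, your ratio equals $Z_t$), and the conclusion rests on the same observation that $(\sqrt{m_s}-l_s)\theta_s\geq 0$.
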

\begin{proof}
Let us define  
\[
Z_t:=\exp\left(-\int_0^t \left(l_s^{\overline{\pi}}-\sqrt{m^{\overline{\pi}}_s}\right)\left(X^{\overline{\pi}}_{\cdot},Y^{\overline{\pi}}_{\cdot}\right)\theta\left(Y^{\overline{\pi}}_{\cdot}\right)ds \right)
\]
and set $\hat{X}_t:=Z_tX_t^{\overline{\pi}}$. 
It\^{o}'s formula shows that $\hat{X}_t$ indeed verifies \eqref{cheney}.
Since $\theta_t\geq 0$ was assumed, we get that $Z_t\geq 1$ hence $\hat{X}_t\geq X_t^{\overline{\pi}}$, for all $t$.
\end{proof}

\subsection{Proof of Theorem \ref{wp1}}

\begin{proof}
%Well-posedness is a simple consequence of Chebyshev's inequality, Assumptions \ref{u}, \ref{acoef1} and 
%Proposition \ref{mestim1}. 
Let $t>0$. By (\ref{w+}) and (\ref{u+}),
\[
w_{+}\left(\mathbb{Q}^{\pi}\left(u_{+}\left(\left(X_{T}^{\pi}-G^{\pi}\right)_{+}\right)>t
\right)\right)\leqslant g_{+}\left[\mathbb{Q}^{\pi}\left(\left(X_{T}^{\pi}-G^{\pi}\right)_{+}^{\alpha}>\frac{t}{k_{+}}-1\right)\right]^{\gamma}.
\]

Hence, 
\begin{eqnarray*}
V_{+}\left(\pi\right) & \leqslant g_{+}\int_{0}^{\infty}\left[\mathbb{Q}^{\pi}\left(\left(X_{T}^{\pi}-G^{\pi}\right)_{+}^{\alpha}>\frac{t}{k_{+}}-1\right)\right]^{\gamma}=&\\
\,\,\,\,\,\, & =g_{+}\left(1+\int_{k_{+}}^{\infty}\left[\mathbb{Q}^{\pi}\left(\left(X_{T}^{\pi}-G^{\pi}\right)_{+}^{\alpha}>\frac{t}{k_{+}}-1\right)\right]^{\gamma}\right),&
\end{eqnarray*}
\begin{equation}
\int_{k_{+}}^{\infty}\left[\mathbb{Q}^{\pi}\left(\left(X_{T}^{\pi}-
G^{\pi}\right)_{+}^{\alpha}>\frac{t}{k_{+}}-1\right)\right]^{\gamma}dy\leqslant k_{+}\int_{0}^{\infty}
\left[\mathbb{Q}^{\pi}\left(\left(X_{T}^{\pi}-G^{\pi}\right)_{+}^{\alpha}>s\right)\right]^{\gamma}dx.
\end{equation}

If $s\geq 1$, applying Chebyshev's inequality and Assumption \ref{ghyp},
\begin{equation}\label{integrable}
\left[\mathbb{Q}^{\pi}\left(\left(X_{T}^{\pi}-G^{\pi}\right)_{+}^{\alpha}>s\right)\right]^{\gamma}=
\left[\mathbb{Q}^{\pi}\left(\left(X_{T}^{\pi}-G^{\pi}\right)_{+}^{\alpha\vartheta}>s^{\vartheta}\right)\right]^{\gamma}\leqslant\frac{\left[\mathbf{E}_{\pi}\left(X_{T}^{\pi}-G^{\pi}\right)^{\alpha\vartheta}_{+}\right]^{\gamma}}{s^{\vartheta\gamma}}\leqslant M^{\gamma}\frac{1}{s^{\vartheta\gamma}},
\end{equation}
where $M=\sup_{\pi}\mathbf{E}_{\pi}\left(X_{T}^{\pi}\right)^{\alpha\vartheta}_{+}<\infty$ (note that $G\geqslant 0$),
by Proposition \ref{mestim1}. Note that $1/s^{\vartheta\gamma}$ is integrable on $[1,\infty)$.  

%\begin{equation}
%w_{+}\left(\mathbb{Q}^{\pi}\left(u_{+}\left(\left(X_{T}^{\pi}-G^{\pi}\right)_{+}\right)>y\right)\right)
%\leqslant \frac{K}{y^{\gamma\theta}}\cdot\mathbb{I}\left(y\right)_{\left[1,\infty\right)}+1\cdot\mathbb{I}\left(y\right)_{\left[0,1\right)}.
%\end{equation}
%for some $K>0$. 
Hence the problem is well-posed since $V(\pi)\leq V_+(\pi)$ for all $\pi\in\Pi'$ and we have just seen that
the latter has an upper bound independent of $\pi$.

%From now on let Assumption \ref{positv} be also in force. 
By Theorem \ref{thm3.2} the set of 
laws $\left\{ \mathbb{Q}^{\pi}\right\}$, $\pi\in\overline{\Pi}$ of the processes 
$\zeta_{\cdot}^{\pi}=\left(X_{\cdot}^{\pi},Y_{\cdot}^{\pi}\right)$ is relatively compact in the weak 
topology. Let $\left\{ \pi^{n}\right\} $ be sequence of weak controls $\pi^{n}\in\overline{\Pi}'$ such that 
\begin{equation}
V\left(\pi^{n}\right)\rightarrow\sup_{\pi\in\overline{\Pi}'}V\left(\pi\right),\,\,\,\, n\rightarrow\infty.
\end{equation}

There is a subsequence of $\left\{ \pi^{n}\right\} $ denoted by $\left\{ \pi^{k}\right\} $ such that 
$\mathbb{Q}^{\pi_{k}}\Rightarrow\mathbb{Q}^{\pi^{\star}}$
as $k\rightarrow\infty$ and $\pi^{\star}\in\overline{\Pi}$. 

By Skorokhod's theorem there is a probability space, that will be denoted by $\left(\tilde{\Omega},\tilde{\mathcal{F}},\tilde{\mathbb{P}}\right)$
and random variables $\tilde{X}_{\cdot}^{k},\tilde{Y}_{\cdot}^{k}:
\left(\tilde{\Omega},\tilde{\mathcal{F}},\tilde{\mathbb{P}}\right)\rightarrow$ ${C}
\left(\left[0,T\right];\mathbb{R}\right)$, ${C}\left(\left[0,T\right];\mathbb{R}^{d}\right)$, respectively, such that the law of 
$\left(\tilde{X}_{\cdot}^{k},\tilde{Y}_{\cdot}^{k}\right)$ equals $\mathbb{Q}^{\pi_{k}}$ 
and $\tilde{X},\tilde{Y}:\left(\tilde{\Omega},\tilde{\mathcal{F}},\tilde{\mathbb{P}}\right)$
$\rightarrow{C}\left(\left[0,T\right];\mathbb{R}\right)$, 
${C}\left(\left[0,T\right];\mathbb{R}^{d}\right)$ with law equal to $\mathbb{Q}^{\pi^{\star}}$ 
such that $\tilde{X}^{k}\rightarrow\tilde{X}$, $\tilde{Y}^{k}\rightarrow\tilde{Y}$ a.s. in the uniform norm.

By Assumption \ref{ul}, $\tilde{Y}^k$ and $\tilde{Y}$ have the same law and 
$\tilde{Y}^k\rightarrow\tilde{Y}$ in probability (even a.s.). 
By Th\'eor\`eme 1 in \cite{BEKSY} 
$F\left(\tilde{Y}^k\right)\rightarrow F\left(\tilde{Y}\right)$ in probability.
%, and both random variables $G\left(\tilde{Y}^k\right)$ and $\tilde{G}^{\pi_{k}}$ have the same law, 
%so do the random variables $F\left(\tilde{Y}\right)$ and $\tilde{G}^{\pi^{\star}}$.  

By continuity of $u_{\pm}$ and the projection $p_T(f):=f(T)$, $f\in {C}
\left(\left[0,T\right];\mathbb{R}\right)$, we also have 
$u_{\pm}\left(\left(\tilde{X}_T^k-F\left(\tilde{Y}^k\right)\right)_{\pm}\right)\rightarrow$ 
$u_{\pm}\left(\left(\tilde{X}_T-F\left(\tilde{Y}\right)\right)_{\pm}\right)$ in probability.

It follows that, denoting by $D$ the set of discontinuity points of the cumulative distribution
functions of $u_{\pm}\left(\left(\tilde{X}_T-F\left(\tilde{Y}\right)\right)_{\pm}\right)$, 
for any $y\in\mathbb{R}\setminus D$ we have
\[
\mathbb{Q}^{\pi_{k}}\left(u_{\pm}\left(\left(X_{T}^{\pi_{k}}-G^{\pi_{k}}\right)_{\pm}\right)>y\right)\rightarrow\mathbb{Q}^{\pi^{\star}}\left(u_{\pm}\left(\left(X_{T}^{\pi^{\star}}-G^{\pi^{\star}}\right)_{\pm}\right)>y\right)
\]
as $k\rightarrow\infty$.

Since $w_{\pm}$ are continuous, also 
\[w_{\pm}\left(\mathbb{Q}^{\pi_{k}}\left(u_{\pm}\left(\left(X_{T}^{\pi_{k}}-G^{\pi_{k}}\right)_{\pm}\right)>y\right)\right)\rightarrow w_{\pm}\left(\mathbb{Q}^{\pi^{\star}}\left(u_{\pm}\left(\left(X_{T}^{\pi^{\star}}-G^{\pi^{\star}}\right)_{\pm}\right)>y\right)\right),\]
for $y\notin D$.
By Fatou's lemma,
\begin{eqnarray*}
\int_{0}^{\infty}w_{-}\left(\mathbb{Q}^{\pi^{\star}}\left(u_{-}\left(\left(X_{T}^{\pi^{\star}}-G^{\pi^{\star}}\right)_{-}\right)>y\right)\right)dy\leqslant\\
\underline{\lim}_{k}\int_{0}^{\infty}w_{-}\left(\mathbb{Q}^{\pi_{k}}\left(u_{-}\left(\left(X_{T}^{\pi_{k}}-G^{\pi_{k}}\right)_{-}\right)>y\right)\right)dy,\label{eq:fatou1}
\end{eqnarray*}
and, by \eqref{integrable} and the Fatou lemma,
\begin{eqnarray*}
\int_{0}^{\infty}w_{+}\left(\mathbb{Q}^{\pi^{\star}}\left(u_{+}\left(\left(X_{T}^{\pi^{\star}}-G^{\pi^{\star}}\right)_{+}\right)>y\right)\right)dy\geqslant\\
\overline{\lim}_{k}\int_{0}^{\infty}w_{+}\left(\mathbb{Q}^{\pi_{k}}\left(u_{+}\left(\left(X_{T}^{\pi_{k}}-G^{\pi_{k}}\right)_{+}\right)>y\right)\right)dy,
\end{eqnarray*}

It follows that $V(\pi^{\star})=\sup_{\pi\in\overline{\Pi}'}V(\pi)$. It is also clear that $\pi^{\star}\in
\overline{\Pi}'$.
{Let} $\left(a_{t},b_{t}\right)$ be the $\mathbb{A}$- valued processes associated to $\pi^{\star}\in\overline{\Pi}$ 
as in Definition \ref{bPi}. 
By Lemma \ref{pistar} there is 
$$
\pi'=\left(\Omega^{\pi^{\star}},\mathcal{F}^{\pi^{\star}},
(\mathcal{F}^{\pi^{\star}}_t)_{0\leqslant t\leqslant T},\mathbb{P}^{\pi^{\star}},X^{\pi'},Y^{\pi^{\star}},
\left(B^{\pi^{\star}},W^{\pi^{\star}}\right),(x,y)\right)
$$ 
which is a portfolio value process in the sense of Definition \ref{prtf}
and for which
\[
u_{+}\left(\left(X_{T}^{\pi'}-G^{\pi'}\right)_{+}\right)\geqslant 
u_{+}\left(\left(X_{T}^{\pi^{\star}}-G^{\pi^{\star}}\right)_{+}\right),
\]
notice that $G^{\pi'}=G^{\pi^{\star}}$ and 
$u_{-}\left(\left(X_{T}^{\pi^{\star}}-G^{\pi^{\star}}\right)_{-}\right)\geqslant 
u_{-}\left(\left(X_{T}^{\pi'}-G^{\pi'}\right)_{-}\right)$ also. Hence
$V(\pi')\geq \sup_{\pi\in\overline{\Pi}'} V(\pi)$.
\normalsize{Thus}, recalling Remark \ref{threestars},
the investment strategy $$
\hat{\pi}=\left(\Omega^{\pi\star},\mathcal{F}^{\pi\star},\mathbb{P}^{\pi\star},
\left\{ \mathcal{F}_{t}^{\pi\star}\right\}_{0\leqslant t\leqslant T},
X_{\cdot}^{\pi'},Y_{\cdot}^{\pi\star},\sqrt{m^{\pi'}_{\cdot}},(B^{\pi\star},W^{\pi\star}),(x,y)\right)
$$ 
is optimal i.e. $\sup_{\pi\in\Pi'}V\left(\pi\right)\leq V(\pi^{\star})\leq V\left(\hat{\pi}\right)=V(\hat{\pi})$
and, obviously, $\hat{\pi}\in\Pi'$.
\end{proof}  

\section{Extensions}\label{ext}

Based on economic considerations, we extend the model that was developed in the last section, 
by allowing the portfolio value process to influence the factor modelled by 
$Y_t$, the influence being 'additive'. 
This may be an appropiate model for e.g. a large investor.
Furthermore, a riskless asset with deterministic interest rate $r_t$ at time $t$ is included.
For the sake of simplicity we will assume that the factor process $Y$ is one-dimensional, the
results can be extended to the multidimensional case in a trivial way.
\begin{definition} \label{defe}
Let $\nu\left(t,y_{\cdot}\right)$ be a $\mathbb{R}$-valued process, such that the restriction of  $\nu$ to 
$\left[0,t\right]\times C\left(\left[0,T\right];\mathbb{R}\right)$ is 
$\mathcal{B}\left([0,t]\right)\otimes \mathscr{N}_{t}$-measurable, for any $0\leqslant t\leqslant T$.

Similarly, we define the $\mathbb{R}$-valued 
coefficients $\theta, \lambda, \rho, \kappa $ to have the same measurability.
\end{definition}

In this case, the stochastic differential equations of the optimal investment model are given by
\begin{equation}
dY_{t}=\nu\left(t,Y_{\cdot}\right)dt+\kappa\left(t,Y_{\cdot}\right)dB_{t}+\rho\left(t,X_{\cdot}\right)dX_t,
\end{equation}
\begin{equation}
dX_{t}=\phi_t\theta\left(t,Y_{\cdot}\right)X_{t}dt+\phi_t\lambda\left(t,Y_{\cdot}\right)X_{t}dW_{t}+
\left(1-\phi_t\right)r_{t}X_{t}dt,
\end{equation}
where $\phi_t\in [0,1]$ represents the proportion of wealth invested in the stock, $Y$ is an economic
factor, $X$ is the value process of the given portfolio strategy $\phi$. The set $\Pi$ can
be defined analogouly to Definition \ref{def1}.

%\texttt{CHANGE? $\theta\left(t,Y_{\cdot}\right)$ is bounded below $\mathbb{P}^{\pi}-a.s.$, say $\theta\left(t,Y_{\cdot}\right)\geqslant m$,$m\in\mathbb{R}_{+}$ and the deterministic interest rates are such that $r_t\leqslant m$.} 
\begin{assumption}\label{inter}
For all $t\geqslant 0$  the growth rate of the stock is greater than the growth rate of the bond, i.e.  
for all $t,y_{\cdot}$,
\begin{equation}
\theta\left(t,y_{\cdot}\right)\geqslant r_t\geqslant 0, \,\,\,\mathbb{P}^\pi-\mbox{a.s.}
\end{equation}
The functionals $\nu,\theta,\lambda,\kappa, \rho$ are bounded and path-continuous in the sense of 
Assumption \ref{acoef1}.
\end{assumption}

\begin{assumption}\label{ghyp2}
The reference point $G$ is a constant.
\end{assumption}
As in Subsection \ref{ketto}, we consider a relaxed setting. With this purpose in mind, we define 
$\theta^r\left(t,y_{\cdot}\right)=\theta\left(t,y_{\cdot}\right)-r_t$.
In what follows, $E$ is the $2\times 2$ matrix such that $E^{11}=1$ and $E^{ij}=0$ otherwise.
\begin{definition} \label{defAb}
%To simplify notation, let us denote $\zeta:=\left(x_{\cdot},y_{\cdot}\right)$ the $d+1$-dimensional 
%process, w
We define the following family of sets. 
\begin{equation}\label{A2tdef}
A_t\left(x_{\cdot},y_{\cdot}\right) = 
\left\{(a,b)\in\mathbb{A} \left| a=\frac{1}{2}\kappa^2
\left(t,y_{\cdot}\right)E+\frac{1}{2}m
\lambda^2\left(t,y_{\cdot}\right)x^{2}_t\left(\begin{array}{cc} \rho^2\left(t,x_{\cdot}\right) & \rho\left(t,x_{\cdot}\right) \\ \rho\left(t,x_{\cdot}\right) & 1\end{array}\right)\right. ,\right.
\end{equation}
\begin{equation}
\left. \,  b=\left(\begin{array}{c}
         \nu\left(t,y_{\cdot}\right) \\ 0 
  \end{array} \right) 
  + \left(l x_t\theta^r\left(t,y_{\cdot}\right)+r_t x_t\right)\left(\begin{array}{c} \rho\left(t,x_{\cdot}\right) \\ 1 \end{array} \right),\,\, 
   \begin{array}{c}
         0\leqslant m\leqslant 1 \\
          0\leqslant l\leqslant \sqrt{m}
  \end{array} 
\right\}
\end{equation}
\end{definition}

The following lemma is crucial: it enables us to use results of \cite{Kry01}.

\begin{lemma} \label{A2}
The set $A_{t}\left(x_{\cdot},y_{\cdot}\right)$ is closed,
convex and bounded for each $\left(x_{\cdot},y_{\cdot}\right)\in C\left(\left[0,T\right];\mathbb{R}^{2}\right)$
and each $t\geqslant0$ 
\end{lemma}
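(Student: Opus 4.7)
The plan is to exhibit $A_t(x_\cdot, y_\cdot)$ as the affine image of a simple compact convex planar region. Concretely, set
$$D := \{(m,l)\in\mathbb{R}^2 : 0\leq m\leq 1,\ 0\leq l\leq \sqrt{m}\}$$
and define $\Phi_t : D \to \mathbb{A}$ so that $\Phi_t(m,l)$ is the pair $(a,b)$ read off from \eqref{A2tdef}. Then $A_t(x_\cdot, y_\cdot) = \Phi_t(D)$, and it suffices to check (i) that $D$ is compact and convex, and (ii) that $\Phi_t$ is affine and continuous.

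For (i), $D$ is clearly contained in $[0,1]^2$ and cut out by continuous inequalities, so it is compact. Convexity follows from concavity of the square root: given $(m_1,l_1),(m_2,l_2)\in D$ and $\mu\in[0,1]$,
$$\mu l_1 + (1-\mu) l_2 \leq \mu\sqrt{m_1}+(1-\mu)\sqrt{m_2}\leq \sqrt{\mu m_1+(1-\mu)m_2},$$
so the convex combination again lies in $D$.

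For (ii), I would inspect \eqref{A2tdef} with $t$, $x_\cdot$, $y_\cdot$ held fixed. The matrix $a$ decomposes as a constant piece $\tfrac{1}{2}\kappa^2(t,y_\cdot)E$ plus a piece linear in $m$ (the contribution involving $\lambda^2 x_t^2$ times a $2\times 2$ matrix independent of $m$ and $l$), while the vector $b$ decomposes as a constant piece $(\nu(t,y_\cdot),0)^\star + r_t x_t(\rho(t,x_\cdot),1)^\star$ plus a piece linear in $l$ (namely $l x_t \theta^r(t,y_\cdot)(\rho(t,x_\cdot),1)^\star$). Hence $\Phi_t$ is an affine map from $\mathbb{R}^2$ into $\mathbb{A}$, evidently continuous, and under Assumption \ref{inter} its coefficients are controlled by $\|\kappa\|_\infty, \|\lambda\|_\infty, \|\theta\|_\infty, \|\nu\|_\infty, \|\rho\|_\infty$ and $|x_t|$. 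A continuous affine image of a compact convex set is again compact (hence closed and bounded) and convex, which yields all three required properties for $A_t(x_\cdot,y_\cdot) = \Phi_t(D)$, with the bound depending on $(x_\cdot, y_\cdot)$ only through $|x_t|$, in analogy with \eqref{star} in Lemma \ref{rcoef}.

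The only step requiring a moment's thought is the convexity of $D$, which is a one-line consequence of concavity of $\sqrt{\cdot}$; I do not foresee any real obstacle. In particular, the coupling of the two coordinates of $(a,b)$ through the entries involving $\rho$ is not a problem because $\rho(t,x_\cdot)$ is frozen once the paths are fixed and so enters $\Phi_t$ only as a numerical constant, mirroring the structure of the convexity argument in Lemma \ref{rcoef}.
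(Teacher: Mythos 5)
Your proof is correct and follows essentially the same route as the paper: the paper also verifies convexity by observing that a convex combination of two elements of $A_t(x_\cdot,y_\cdot)$ is the element parametrized by $(\mu m+(1-\mu)m',\,\mu l+(1-\mu)l')$ and then invokes concavity of $\sqrt{\cdot}$, which is exactly your ``affine image of the convex set $D$'' argument in unabstracted form. Your explicit treatment of closedness and boundedness via compactness of $D$ and continuity of $\Phi_t$ is a small completeness bonus over the paper, which simply states that only convexity needs checking and defers the bound to the estimates of Lemma \ref{rcoef}.
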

\begin{proof} Only convexity needs to be checked.
Let $0\leqslant\mu\leqslant1$ and $\left(a,b\right)$,
$\left(\alpha,\beta\right)\in A_{t}\left(x_{\cdot},y_{\cdot}\right)$
then the convex linear combination $\mu a+(1-\mu)\alpha$ is equal to 
\[
\frac{1}{2}\kappa^2\left(t,y_{\cdot}\right)E+\frac{1}{2}\left(\mu m +\left(1-\mu\right)m'\right) 
\lambda^2\left(t,y_{\cdot}\right)x^{2}_t\left(\begin{array}{cc} \rho^2\left(t,x_{\cdot}\right) & \rho\left(t,x_{\cdot}\right) \\ \rho\left(t,x_{\cdot}\right) & 1\end{array}\right)
\]
and $\mu b+(1-\mu)\beta$ equals
\[
\left(\begin{array}{c}
         \nu\left(t,y_{\cdot}\right) \\ 0 
  \end{array} \right) 
  + \left(\left(\lambda l+\left(1-\lambda\right)l'\right) x_t\theta^r\left(t,y_{\cdot}\right)+r_t x_t\right)\left(\begin{array}{c} \rho\left(t,x_{\cdot}\right) \\ 1 \end{array} \right)
\]
As $\mu l+(1-\mu)l'\leqslant\mu\sqrt{m}+(1-\mu)\sqrt{m'}\leqslant\sqrt{\mu m+(1-\mu)m'}$,we have
$\mu\left(a,b\right)+(1-\mu)\left(\alpha,\beta\right)\in A_{t}\left(x_{\cdot},y_{\cdot}\right)$.
\end{proof}

The estimates of Lemma \ref{rcoef} apply to this case as well, for some $K>0$,
\[
\left\| A_{t}\left(x_{\cdot},y_{\cdot}\right)\right\| \leqslant K\left(1+\left|X_{t}\right|^{2}\right).
\]
%with the  bound $K$ depending on 
%the path $\zeta_{\cdot} $. Under assumption \ref{acoef1} the bound depends 
%only on $X_t$ and $M$.
This allows to apply the results of \cite{Kry01} just as above, using the class of
relaxed controls defined below.

%We define a class of relaxed controls $\Pi'$ replacing the initial relaxation in definition \ref{bPi}.

\begin{definition}\label{bPi2}
 We say that $\overline{\pi}\in\overline{\Pi}$ if 
 $$
\overline{\pi}:=\left(\Omega,\mathcal{F},\left\{\mathcal{F}_t\right\}_{0\leqslant t\leqslant T},
 \mathbb{P},X_t,Y_t,\left(B_t,W_t\right),(x,y)\right)$$ 
 with
\begin{description}
\item[(a)] $\left(\Omega,\mathcal{F},\left\{\mathcal{F}_t\right\}_{0\leqslant t\leqslant T},
\mathbb{P}\right)$ a complete filtered probability space  whose filtration satisfies the usual conditions;
\item[(b)] the $2$-dimensional process $\xi_t:=\left(B_t,W_t\right)$ is a standard $\mathcal{F}_t$-Brownian motion; 
\item[(c)] the vector $(x,y)\in (0,\infty)\times\mathbb{R}$ is the initial endowment of the portfolio 
process $X_t$ and the initial state of the economic factors $Y_t$, respectively;  
\item[(d)] there exists an $\mathbb{A}$-valued, $\mathcal{F}\otimes\mathcal{B}\left(\left[0,T\right]\right)$ measurable and $\mathcal{F}_t$-adapted process denoted by $\left(a_t,b_t\right)$ such that  
\begin{equation}
\left(\begin{array}{c} Y_t  \\  X_t \\ \end{array}\right)=\int_0^t \sqrt{2 a_s} d\xi_s+ \int_0^t b_s ds
\end{equation}

%and define $a_t\left(X_{\cdot},Y_{\cdot}\right):=\frac{1}{2}\sigma_t\left(X_{\cdot},Y_{\cdot}\right)\sigma^{\star}_t\left(X_{\cdot},Y_{\cdot}\right)$

%\item[(e)] There is an $\mathbb{A}$-valued denoted by $\left(a_t,b_t\right) \mbox{which is } \mathcal{F}\otimes\mathcal{B}\left(\left[0,T\right]\right)$ measurable and $\mathcal{F}_t$-adapted process for any $t\geqslant 0$.
\item[(e)] for almost all $\left(\omega,t\right)\in\Omega\times\left[0,T\right]$, we have 
$\left(a_t,b_t\right)\in A_t\left(X_{\cdot},Y_{\cdot}\right)$ (i.e. we can choose a pair $(m_t,l_t)$ in a 
``measurable way'').
\end{description}
\end{definition}

The vectorial form of the equations (\ref{eq1}) and (\ref{eq2}) can be rewritten. Define
\[
\sigma_t:=\left(\begin{array}{cc} \kappa\left(t,y_{\cdot}\right) & \rho\left(t,x_{\cdot}\right)\sqrt{m_t}x_t\lambda\left(t,y_{\cdot}\right)  \\ 0 & \sqrt{m_t}_t\lambda\left(t,y_{\cdot}\right)x_t \\ \end{array} \right)\,
\]
and we have that the drift is given by
\[
b_t=\left(\begin{array}{c} \nu\left(t,y_{\cdot}\right)\\ 0 \\ \end{array}\right)+\left(x_tl_t\theta^r\left(t,Y_{\cdot}\right)+x_tr_t\right)\cdot\left(\begin{array}{c} \rho\left(t,x_{\cdot}\right)    \\  1 \\ \end{array}\right)
\]

Given a relaxed control $\overline{\pi}$,
%, on $\left(\Omega^{\pi'},\mathcal{F}^{\pi'},\left\{\mathcal{F}_t^{\pi'}\right\}_{t\geqslant 0},\mathbb{P}^{\pi'}\right)$ (we omit the symbol $\pi'$) 
$X_t,Y_t$ are $\mathcal{F}\otimes\mathcal{B}\left(\left[0,T\right]\right)$-measurable and $\mathcal{F}_t$-adapted such that for all $t\geqslant 0$
\begin{equation}\label{eq5}
dY_t=\nu\left(t,Y_{\cdot}\right)dt+\kappa\left(t,Y_{\cdot}\right)dB_t + \rho\left(t,X_{\cdot}\right)dX_t,
\end{equation}
\begin{equation}\label{eq6}
dX_t=\left[l_t\left(\theta\left(t,Y_{\cdot}\right)-r_t\right)X_t+r_t\cdot X_t\right]dt+\sqrt{m_t}\lambda\left(t,Y_{\cdot}\right)X_tdW_t. 
\end{equation}

%The results on well-posedness and optimality holds in this class of relaxed controls. 
The proof of the next result follows closely that of Theorem \ref{wp1}.

\begin{theorem}\label{opt2}
Let Assumptions \ref{ul}, \ref{u}, \ref{inter} and \ref{ghyp2} hold. The problem \eqref{eq:V funt} is well-posed
and $\Pi'\neq\emptyset$ (the identically zero strategy belongs to $\Pi'$, where $\Pi'$ is defined
analogously to Subsection \ref{mrr}).
There is $\hat{\pi}\in\Pi'$ such that the supremum in 
(\ref{eq:V funt}) is attained. \hfill $\Box$
\end{theorem}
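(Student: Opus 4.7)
The plan is to mirror the proof of Theorem \ref{wp1} step by step: establish tightness and Krylov-type weak compactness for $\bar\Pi$, verify well-posedness, extract an optimiser in $\bar\Pi'$ by a Skorokhod/Fatou argument, and finally convert the relaxed optimiser into an element of $\Pi'$ via a change-of-drift analogous to Lemma \ref{pistar}.

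Lemma \ref{A2} provides convexity, closedness and the quadratic growth bound $\|A_t(x_\cdot, y_\cdot)\| \leq K(1 + |x_t|^2)$ for the sets \eqref{A2tdef}. Under Assumption \ref{inter}, the Appendix proofs of Propositions \ref{mestim1} and \ref{tgh} go through with only cosmetic modifications (the drift and diffusion now involve $\rho\, dX$ and $r_t$, both bounded), yielding uniform moment and Kolmogorov estimates and hence sequential weak compactness of $\{\mathbb{Q}^{\bar\pi}\}$ on $C([0,T]; \mathbb{R}^2)$ with limits in $\bar\Pi$ (the analogue of Theorem \ref{thm3.2}). Well-posedness then follows easily: Assumption \ref{ghyp2} makes $G$ constant, hence $G \in L^p$ for every $p$, and the Chebyshev bound \eqref{integrable} with any $\vartheta > 1/\gamma$ gives $\sup_{\bar\pi} V_+(\bar\pi) < \infty$; the identically zero strategy yields a finite loss integral and shows $\Pi' \neq \emptyset$.

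For existence of an optimiser, I take $\bar\pi^n \in \bar\Pi'$ with $V(\bar\pi^n) \uparrow \sup_{\bar\Pi'} V$, extract a weakly convergent subsequence with limit $\bar\pi^* \in \bar\Pi$, and realise all processes on one probability space with a.s.\ uniform convergence via Skorokhod. Because $G$ is constant, this argument bypasses Assumption \ref{ul} and \cite{BEKSY} which were needed in the proof of Theorem \ref{wp1}: continuity of $u_\pm$ and of the endpoint projection immediately give $u_\pm((\tilde X^n_T - G)_\pm) \to u_\pm((\tilde X^*_T - G)_\pm)$ a.s. Fatou on $V_-$ and reverse Fatou (with the integrable majorant from well-posedness) on $V_+$ then yield $V(\bar\pi^*) \geq \sup_{\bar\Pi'} V$.

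The hard part will be the final step: producing $\hat\pi \in \Pi'$ with $V(\hat\pi) \geq V(\bar\pi^*)$. Writing $\theta^r := \theta - r \geq 0$ (Assumption \ref{inter}), the Lemma \ref{pistar}-style change of drift $Z_t := \exp \int_0^t (\sqrt{m^{\bar\pi^*}_s} - l^{\bar\pi^*}_s)\,\theta^r(s, Y^{\bar\pi^*}_\cdot)\,ds \geq 1$ together with $\tilde X := Z X^{\bar\pi^*}$ produces, by It\^o, a solution of the portfolio SDE with $\phi = \sqrt{m^{\bar\pi^*}}$ driven by $Y^{\bar\pi^*}$, satisfying $\tilde X_T \geq X^{\bar\pi^*}_T$ a.s. The genuine subtlety relative to Lemma \ref{pistar} is that \eqref{eq5} couples $Y$ to $X$ through $\rho\, dX$, so membership in $\Pi$ demands a consistent factor $\tilde Y$ that co-evolves with $\tilde X$; weak existence of such a coupled pair with $l = \sqrt m$ on a suitable extension of the probability space follows from the path-continuity and boundedness of the coefficients via the same martingale-problem construction underlying Theorem \ref{thm3.2}. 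Since $G$ is constant, $V$ depends only on the law of the terminal wealth, and the pathwise domination above, together with the analogue of Remark \ref{threestars}, identifies $\tilde\pi$ as the required optimiser in $\Pi'$.
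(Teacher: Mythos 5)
Your overall route is the one the paper intends: the paper gives no separate proof of Theorem \ref{opt2} beyond the remark that it ``follows closely'' that of Theorem \ref{wp1}, together with the supporting material of Section \ref{ext} (Lemma \ref{A2}, the quadratic bound on $\left\|A_t\right\|$, Definition \ref{bPi2}). Your treatment of compactness, the moment estimates, well-posedness and the Skorokhod/Fatou passage is sound, and your observation that the constancy of $G$ (Assumption \ref{ghyp2}) lets one dispense with the \cite{BEKSY} step is correct and explains why the paper strengthens Assumption \ref{ghyp} to Assumption \ref{ghyp2} in this section.

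The gap is in your last step, and you have only half-diagnosed it. The process $\tilde X:=ZX^{\overline{\pi}^{\star}}$ with $Z_t=\exp\int_0^t\bigl(\sqrt{m_s}-l_s\bigr)\theta^r\bigl(s,Y^{\overline{\pi}^{\star}}_\cdot\bigr)\,ds$ does satisfy the portfolio SDE with $l=\sqrt m$ and does dominate $X^{\overline{\pi}^{\star}}_T$ pathwise, but its coefficients are evaluated along the \emph{old} factor path $Y^{\overline{\pi}^{\star}}$, and the pair $\bigl(\tilde X, Y^{\overline{\pi}^{\star}}\bigr)$ does not solve the coupled system \eqref{eq5}--\eqref{eq6}: $Y^{\overline{\pi}^{\star}}$ was driven by $\rho\,dX^{\overline{\pi}^{\star}}$, not by $\rho\,d\tilde X$. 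So this object is not an element of $\Pi$. The genuinely admissible object you then invoke --- a weak solution $(\hat X,\hat Y)$ of the coupled system with $l=\sqrt m$ --- is a \emph{different} process: its coefficients $\theta^r(t,\hat Y_\cdot)$ and $\lambda(t,\hat Y_\cdot)$ are evaluated along a new factor path, and no comparison theorem yields $\hat X_T\geqslant X^{\overline{\pi}^{\star}}_T$ either pathwise or in law, since raising the drift of $X$ perturbs $Y$ through $\rho\,dX$ and may move $\theta$ downward, reversing the inequality. Your final sentence silently identifies these two constructions, and that is exactly where the argument breaks. To close it you would need a monotonicity or comparison argument for the coupled system, or a proof that the optimal relaxed control already has $l=\sqrt m$ a.e., or an additional structural hypothesis; none of these is supplied (and, to be fair, the paper does not supply one either).
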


\section{Appendix}
Some proofs of auxiliary results are included in this section.

\begin{proof}[Proof of Proposition \ref{mestim1}]
We shall write $\xi_s=\left(W_s,B_s\right)$.  
Suppose $m\geqslant2$. The notation $|\cdot|$ will be used to denote Euclidean norm in spaces of
various dimensions. Then  
\begin{equation}
|\zeta_{t}|^{m}=\left[\left(X_{t}\right)^{2}+\left|Y_{t}\right|^{2}\right]^{\frac{m}{2}}
\leqslant2^{\frac{m}{2}-1}\cdot\left[|X_{t}|^{m}+\left|Y_{t}\right|^{m}\right],
\end{equation}
so it is enough to obtain that the moments of each of the processes $Y_{t}$ and $X_{t}$ satisfy (\ref{totbnd}). 
Set 
$b_{s,2}=\left(l_{t}\theta_{t}\left(Y_{\cdot}\right)X_{t}\right)$ and $\sigma_{s,2}=\left(0,m_{t}\lambda_{t}\left(Y_{\cdot}\right)X_{t}\right)$
\begin{equation}
\mathbf{E}_{\pi}\left[\sup_{t\leqslant T}\left|X_{t}\right|^{m}\right]\leqslant3^{m-1}
\left(\left|X_{0}\right|^{m}+\mathbf{E}_{\pi}\left(\int_{0}^{T}|b_{s,2}|ds\right)^{m}+\mathbf{E}_{\pi}\left[\sup_{t\leqslant T}\left|\int_{0}^{t}\sigma_{s,2}d\xi_s\right|^{m}\right]\right),
\end{equation}

By Jensen's inequality and Burkholder-Davis-Gundy inequality 
\[
\mathbf{E}_{\pi}\left[\sup_{t\leqslant T}\left|X_{t}\right|^{m}\right]\leqslant3^{m-1}\left(\left|X_{0}\right|^{m}+\mathbf{E}_{\pi}\left(\int_{0}^{T}|b_{s,2}|ds\right)^{m}+C_{m}\mathbf{E}_{\pi}\left[\left|\int_{0}^{T}\left|\sigma_{s,2}\right|^{2}ds\right|^{\frac{m}{2}}\right]\right),
\]
for some $C_m>0$.

We can apply again Jensen's inequality (now with respect the ``uniform density'' on $\left[0,T\right]$)
\begin{equation}
\left(T\int_{0}^{T}\frac{|b_{s,2}|}{T}ds\right)^{m}\leqslant T^{m-1}\cdot\int_{0}^{T}|b_{s,2}|^{m}ds,\,\,\,\mbox{and }\,\,\left|T\int_{0}^{T}\frac{\left\|\sigma_{s,2}\right\|^{2}}{T}ds\right|^{\frac{m}{2}}\leqslant T^{m/2-1}\cdot\int_{0}^{T}\left\|\sigma_{s,2}\right\|^{m}ds,
\end{equation}
here $\left\|\sigma_{s,2}\right\|^{m}=m_{t}^{m}\left|\lambda_{t}\left(Y_{\cdot}\right)\right|^{m}\cdot X_{t}^{m}$ and $|b_{s,2}|^{m}=l_{t}^{m}\theta_{t}^{m}\left(Y_{\cdot}\right)X_{t}^{m}$, however, we can use the estimate 
(\ref{bndeq}) above,
\[
\mathbf{E}_{\pi}\left[\sup_{t\leqslant T}\left|X_{t}\right|^{m}\right]\leqslant 3^{m-1}\left(\left|X_{0}\right|^{m}+K\mathbf{E}_{\pi}\left[\int_{0}^{T}\left(1+\sup_{t\leqslant s}\|\zeta_{t}\|\right)^{m}ds\right]\right),
\]
similarly,
\[
\mathbf{E}_{\pi}\left[\sup_{t\leqslant T}\left|Y_{t}\right|^{m}\right]\leqslant3^{m-1}\left(\left|Y_{0}\right|^{m}+K' \mathbf{E}_{\pi}\left[\int_{0}^{T}\left(1+\sup_{t\leqslant s}\|\zeta_{t}\|\right)^{m}ds\right]\right),
\]
for constants $K,K'$.
Then we have 
\begin{equation}
\mathbf{E}_{\pi}\left[\sup_{t\leqslant T}\|\zeta_{t}\|^{m}\right]\leqslant K(m)\left(\|\zeta_{0}\|^{m}+\left[\int_{0}^{T}1+\mathbf{E}_{\pi}\left(\sup_{t\leqslant s}\|\zeta_{t}\|\right)^{m}ds\right]\right),\label{eq:growns}
\end{equation}
for some $K(m)>0$ so by Gronwall's lemma, 
\[
\mathbf{E}_{\pi}\left[\sup_{t\leqslant T}\|\zeta_{t}\|^{m}\right]\leqslant
L(m),
\]
with a fixed constant $L(m)$, for all $\pi\in\Pi$.
The case $0<m<2$ follows from the monotonicity of the norms. 
\end{proof}

\begin{proof}[Proof of Proposition \ref{tgh}]
As in Proposition \ref{mestim1}, it is enough to show a similar estimate (\ref{eq:tightnessK}) for each of the coordinates $X_{t}$ and $Y_{t}$. That means 
\begin{equation}
\mathbf{E}\left|Y_{t}-Y_{s}\right|^{\eta}\leqslant K_{1}\left|t-s\right|^{\eta/2}\,\,\,\mbox{and }\,\,\,\mathbf{E}\left|X_{t}-X_{s}\right|^{\eta}\leqslant K_{2}\left|t-s\right|^{\eta/2}.
\end{equation}
By Assumption \ref{acoef1} the first inequality is a simple consequence of B-D-G's and Jensen's inequality: 
\[
\mathbf{E}\left|Y_{t}-Y_{s}\right|^{\eta}\leqslant2^{\eta-1}\cdot\left[\mathbf{E}\left(\int_{s}^{t}\left|\nu\left(Y_{[0,r]}\right)\right|dr\right)^{\eta}+\mathbf{E}\left(\sup_{s\leqslant r\leqslant t}\left|\int_{s}^{r}\kappa_{r}\left(Y_{[0,r]}\right)dB_{r}\right|^{\eta}\right)\right],
\]
\[
\mathbf{E}\left|Y_{t}-Y_{s}\right|^{\eta}\leqslant2^{\eta-1}\cdot\left[M^{\eta}\left|t-s\right|^{\eta}+C_{\eta}\cdot\mathbf{E}\left(\int_{s}^{t}\kappa_{r}^{2}\left(Y_{[0,r]}\right)dr\right)^{\eta/2}\right],
\]
thus
\[
\mathbf{E}\left|Y_{t}-Y_{s}\right|^{\eta}\leqslant2^{\eta-1}\cdot\left[M^{\eta}\left|t-s\right|^{\eta}+C_{\eta}M^{\eta}\left|t-s\right|^{\eta/2}\right]\leqslant 2^{\eta-1}\cdot\left[M^{\eta}T^{\eta/2}+C_{\eta}M^{\eta}\right]\cdot\left|t-s\right|^{\eta/2}.
\]
 The second estimate relies upon Proposition \ref{mestim1}: 
\[
\mathbf{E}\left|X_{t}-X_{s}\right|^{\eta}\leqslant2^{\eta-1}\cdot\left[\mathbf{E}\left(\int_{s}^{t}\left|l_{r}\theta_{r}\left(Y_{[0,r]}\right)X_{r}\right|dr\right)^{\eta}+\mathbf{E}\left(\sup_{s\leqslant r\leqslant t}\left|\int_{s}^{r}\lambda_{r}\left(Y_{[0,r]}\right)\sqrt{m_{r}}\, X_{r}dW_{r}\right|^{\eta}\right)\right],
\]
\[
\leqslant2^{\eta-1}\cdot M^{\eta}\cdot\left[\mathbf{E}\left(\int_{s}^{t}\left|X_{r}\right|dr\right)^{\eta}+C_{\eta}\mathbf{E}\left(\left|\int_{s}^{t}\,\left|X_{r}\right|^{2}dr\right|^{\eta/2}\right)\right],
\]
\[
\mathbf{E}\left|X_{t}-X_{s}\right|^{\eta}\leqslant 2^{\eta-1}\cdot M^{\eta}\left[\left(t-s\right)^{\eta}\cdot\mathcal{N}\left(\eta,T\right)+\left(t-s\right)^{\eta/2}\cdot\mathcal{N}\left(\eta,T\right)\right]= K_{3}\left|t-s\right|^{\eta/2},
\]
where $K_{3}=2^{\eta-1}M^{\eta}\mathcal{N}\left(\eta,T\right)\cdot\left(T^{\eta/2}+1\right)$
and $\mathcal{N}\left(\eta,T\right)$ is the upper bound of 
$\sup_{\overline{\pi}\in\overline{\Pi}}\mathbf{E}_{\pi}\left[\sup_{t\leqslant T}|X_{t}|^{\eta}\right]$ 
in Proposition \ref{mestim1}. Note that the constants do not depend on $\overline{\pi}$ as neither $M$ nor 
$\mathcal{N}\left(\eta,T\right)$ do. 
\end{proof}

\end{document}